\newtheorem{defi}{Definition}
\newtheorem{theo}{Theorem}
\newtheorem{prop}{Proposition}
\newtheorem{lemm}{Lemma}
\newtheorem{coro}{Corollary}
\newenvironment{definition}{\begin{defi} \rm }{\end{defi}}
\newenvironment{theorem}{\begin{theo} \rm }{\end{theo}}
\newenvironment{proposition}{\begin{prop} \rm }{\end{prop}}
\newenvironment{lemma}{\begin{lemm} \rm }{\end{lemm}}
\newenvironment{proof}{\begin{trivlist} \item[\hspace{\labelsep}\bf Proof:]}{\hfill$\Box$\end{trivlist}}
\def\nat{\mathbb{N}}
\renewcommand{\i}{\ensuremath{\ocircle}\xspace}
\newcommand{\odd}{\ensuremath{\square}\xspace}
\newcommand{\even}{\ensuremath{\Diamond}\xspace}
\newcommand{\myomit}[1]{}
\newcommand{\solitaire}[1][]{\ensuremath{\mc{G}^{#1}}\xspace}
\newcommand{\whitegame}[1][]{\ensuremath{\mc{M}^{#1}}\xspace}
\newcommand{\weak}[1][]{\ensuremath{\mc{W}^{#1}}\xspace}
\newcommand{\ziel}{\textsc{\small Zielonka}\xspace}
\newcommand{\pnot}[1]{\bar{#1}}
\newcommand{\attrsym}{\ensuremath{\textit{Attr}}}
\newcommand{\attr}[3][]{\ensuremath{\attrsym^{#1}_{#2}(#3)}}
\newcommand{\mc}[1]{\ensuremath{\mathcal{#1}}}
\newcommand{\ie}{\emph{i.e.}\xspace}
\newcommand{\eg}{\emph{e.g.}\xspace}
\newcommand{\etal}{\emph{et al.}\xspace}
\newcommand{\oftype}{{:}}
\newcommand{\priosym}{\mathcal{P}}
\newcommand{\prio}[1]{\priosym(#1)}
\begin{document}

\title{Zielonka's Recursive Algorithm:\\
dull, weak and solitaire games and tighter bounds}
\author{Maciej Gazda and Tim A.C. Willemse
\institute{Eindhoven University of Technology, The Netherlands}
}

\def\titlerunning{Zielonka's Recursive Algorithm:
dull, weak and solitaire games and tighter bounds}
\def\authorrunning{Maciej Gazda and Tim A.C. Willemse}

\maketitle

\begin{abstract}
Dull, weak and nested solitaire games are important classes of parity
games, capturing, among others, alternation-free $\mu$-calculus
and ECTL$^*$ model checking problems. These classes can be solved in
polynomial time using dedicated algorithms.  We investigate the complexity
of Zielonka's \emph{Recursive} algorithm for solving these special
games, showing that the algorithm runs in $\mc{O}(d\cdot(n+m))$ on weak
games, and, somewhat surprisingly, that it requires exponential time to
solve dull games and (nested) solitaire games. For the latter classes,
we provide a family of games $\solitaire$, allowing us to establish
a lower bound of $\Omega(2^{n/3})$.
We show that an optimisation of Zielonka's
algorithm permits solving games from all three classes in polynomial
time. Moreover, we show that there is a family of (non-special) games
$\whitegame$ that permits us to establish a lower bound
of $\Omega(2^{n/3})$, improving on the previous lower bound for the
algorithm.

\end{abstract}

\section{Introduction}
\label{sec:introduction}

Parity games \cite{EJ:91,McN:93,Zie:98} are infinite duration, two player
games played on a finite directed graph. Each vertex in the graph is
owned by one of the two players and vertices are assigned a priority.
The game is played by moving a single token along the edges in the
graph; the choice where to move next is decided by the player owning the
vertex on which the token currently resides. A parity winning condition
determines the winner of this infinite play; a vertex in the game is won
by the player that can play such that, no matter how the opponent plays,
every play from that vertex is won by her, and the winner of each vertex is
uniquely determined~\cite{McN:93}.  From a practical point
of view, parity games are interesting as they underpin verification,
satisfiability and synthesis problems, see~\cite{CPPW:07,EJ:91,AVW:03}.

The simplicity of the gameplay is fiendishly deceptive. Despite continued
effort, no polynomial algorithm for solving such games (\ie computing the
set of vertices won by each player) has been found.  Solving a parity game
is known to be in $\text{UP}\cap\text{coUP}$~\cite{Jur:98}, a class that
neither precludes nor predicts the existence of a polynomial algorithm. In
the past, non-trivial classes of parity games have been identified
for which polynomial time solving algorithms exist.  These classes
include \emph{weak} and \emph{dull} games, which arise naturally from
alternation-free modal $\mu$-calculus model checking, see~\cite{BG:04},
and \emph{nested solitaire} games which are obtained from \eg the $L_2$
fragment of the modal $\mu$-calculus, see~\cite{BG:04,EJS:01}. Weak and
dull games can be solved in $\mc{O}(n+m)$, where $n$ is the number
of vertices and $m$ is the number of edges, whereas (nested) solitaire
games can be solved in $\mc{O}(\log(d) \cdot (n+m))$, where $d$ is
the number of different priorities in the game.

One of the most fundamental algorithms for solving parity games
is Zielonka's \emph{recursive algorithm}~\cite{Zie:98}. With a complexity of
$\mc{O}(n^d)$, the algorithm is theoretically less attractive
than \eg Jurdzi\'nski's \emph{small progress measures} algorithm~\cite{Jur:00},
Schewe's \emph{bigstep} algorithm~\cite{Sch:07} or the sub-exponential
algorithm due to Jurdzi\'nski \etal\cite{JPZ:06}. However, as observed
in~\cite{FL:09}, Zielonka's algorithm is particularly effective in practice,
typically beating other algorithms.  In view of this, one might therefore
ask whether the algorithm is particularly apt at solving natural
classes of games, taking advantage of the special structure of these
games. We explore this question by investigating the complexity of 
solving weak, dull and
nested solitaire classes using Zielonka's algorithm. Our findings are
as follows:
\begin{itemize}
\item in Section~\ref{sec:weak_zielonka}, we prove that Zielonka's
algorithm solves weak games in polynomial time.

\item in Section~\ref{sec:dull_and_solitaire_zielonka}, we demonstrate
that, somewhat surprisingly, Zielonka's algorithm is exponential
on dull games and solitaire games.

\end{itemize}
The exponential lower bounds we obtain utilise a family of dull, solitaire
games $\solitaire[k]$ with $3k$ vertices on which the algorithm requires
$2^k$ iterations, allowing us to establish a lower bound of $\Omega(2^{n/3})$. 
This lower bound improves on previously documented
lower bounds for this algorithm (\eg, 
in~\cite{Fri:11} a lower bound of $\Omega(1.6^{n/5})$ is established).

In addition to the above complexity results we investigate whether the
most common improvement of the algorithm permits it to run in polynomial
time for all three special classes of games. That is, we prove in
Section~\ref{sec:polynomial_scc} that integrating Zielonka's algorithm
in a \emph{strongly connected component} decomposition algorithm, as
suggested in~\cite{Jur:00,FL:09}, permits solving all three classes in
polynomial time. We analyse the complexity of the resulting algorithm
for these three classes, showing that the optimised algorithm runs
in $\mc{O}(n \cdot (n + m))$ for weak, dull and (nested) solitaire
games.  Note that these worst-case complexities are slightly worse than
those for the dedicated algorithms, but that the applicability of the
algorithm remains universal; \eg, it is capable of solving arbitrary
nestings of dull and solitaire games, and it does not depend on dedicated
algorithms for detecting whether the game is special.

The optimised algorithm still requires exponential time on
non-special games. For instance, Friedmann's games are resilient
to all known optimisations.  Drawing inspiration from our family of
games $\solitaire[k]$ and the games of~\cite{Fri:11}, we define a new
family of games $\whitegame[k]$ containing $3k$ vertices, that is also
resilient to all known optimisations and requires $2^k$ iterations
of the algorithm. This again allows us to establish a lower bound of
$\Omega(2^{n/3})$, also improving on the lower bound established by
Friedmann in~\cite{Fri:11}.  We experimentally compare the running time 
of the optimised
algorithm on our games to those of Friedmann.

\paragraph{Outline.} Before we present our results, we briefly describe 
parity games
in Section~\ref{sec:parity_games} and Zielonka's algorithm in
Section~\ref{sec:zielonka}. Our runtime analysis of Zielonka's original 
algorithm on special games is presented in Section~\ref{sec:special_games}.
We prove that an optimisation of the algorithm runs in polynomial time on
special games in Section~\ref{sec:polynomial_scc}, and we prove that, in
general, the optimisation's complexity is $\Omega(2^{n/3})$ in
Section~\ref{sec:tightness}. In Section~\ref{sec:conclusions}, we
wrap up with some conclusions.

\section{Parity Games}
\label{sec:parity_games}

A parity game is an infinite duration game, played by players \emph{odd},
denoted by $\odd$ and \emph{even}, denoted by $\even$, on a directed,
finite graph. The game is formally defined as follows.

\begin{definition}
A pseudo parity game is a tuple $(V, E, \priosym, (V_\even,V_\odd))$, where
\begin{itemize}
\item $V$ is a finite set of vertices, partitioned in a set $V_\even$ of
vertices owned by player $\even$, and a set of vertices $V_\odd$ 
owned by player $\odd$,
\item $E \subseteq V \times V$ is an edge relation,
\item $\priosym \oftype V \to \nat$ is a priority function that assigns
priorities to vertices, players.
\end{itemize}
We write $v \to w$ iff $(v,w) \in E$.  A pseudo parity game is a
\emph{parity game} if the edge relation is total; \ie for each $v \in V$
there is at least one $w \in V$ such that $(v,w) \in E$.

\end{definition}

We depict (pseudo) parity games as graphs in which diamond-shaped vertices
represent vertices owned by player $\even$ and box-shaped vertices
represent vertices owned by player $\odd$. Priorities, associated with
vertices, are written inside vertices.

For a given (pseudo) parity game, we are often interested in the subgame 
that is obtained by restricting the game to a given set of vertices in
some way. Formally, we define such subgames as follows.

\begin{definition} Let $G = (V,E, \priosym, (V_\even, V_\odd))$ be a (pseudo)
parity game and let $A \subseteq V$ be an arbitrary non-empty
set. The (pseudo) parity game $G \cap A$ is the tuple $(A,
E \cap (A \times A), \priosym|_A, (V_\even \cap A, V_\odd \cap A))$.
The (pseudo) parity game
$G \setminus A$ is defined as the game $G \cap (V \setminus A)$.

\end{definition}

Throughout this section, assume that $G = (V,E,\priosym, (V_\even, V_\odd))$ 
is an arbitrary pseudo parity game.
Note that in general, whenever $G$ is a \emph{parity game} then it is not
necessarily the case that the pseudo parity games $G \setminus A$ and
$G \cap A$ are again parity games, as totality may not be preserved.
However, in what follows, we only consider constructs in which these
operations guarantee that totality \emph{is} preserved.

The game $G$ is said to be \emph{strongly connected}, see~\cite{Tar:72},
if for all pairs of vertices $v,w \in V$, we have $v \to^* w$ and
$w \to^* v$, where $\to^*$ denotes the transitive closure of $\to$. A
\emph{strongly connected component} of $G$ is a maximal set $C \subseteq
V$ for which $G \cap C$ is strongly connected.

\begin{lemma} Let $C \subseteq V$ be a strongly connected component. If
$G$ is a parity game, then so is
$G \cap C$.
\end{lemma}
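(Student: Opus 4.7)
My plan is to check directly that $G \cap C$ satisfies each clause of the parity game definition. Finiteness of the vertex set, the partition of $C$ into $C \cap V_\even$ and $C \cap V_\odd$, and the well-definedness of the priority function $\priosym|_C$ are all inherited for free from $G$, so the only real obligation is to verify totality of the restricted edge relation $E \cap (C \times C)$; in other words, that every $v \in C$ has some successor $w \in C$ with $v \to w$.

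For totality I would exploit strong connectivity of $G \cap C$, which holds by hypothesis. Fix an arbitrary $v \in C$; strong connectivity of $G \cap C$ yields in particular that $v \to^* v$ using only edges inside $C$. Because $\to^*$ is the transitive closure of $\to$, such a witness is a non-empty path of the form $v = v_0 \to v_1 \to \cdots \to v_n = v$ with $n \geq 1$ and all $v_i \in C$. Its first edge $v \to v_1$ is then an edge of $G \cap C$ leaving $v$, which is exactly the outgoing edge we needed. Since $v$ was arbitrary, $G \cap C$ is total and is therefore a parity game.

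The only subtle point to flag is the singleton case $C = \{v\}$: here the general argument collapses to the observation that $v$ must carry a self-loop in $G$, which is precisely what strong connectivity forces on a one-vertex component under the paper's definition (in which $v \to^* v$ is required for \emph{every} $v$, not only for distinct pairs). I do not expect a genuine obstacle beyond keeping the reading of $\to^*$ straight; once one commits to the non-reflexive transitive closure, the remainder is purely compositional and requires no additional hypothesis on $G$.
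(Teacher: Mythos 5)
Your proposal is correct: the paper states this lemma without giving a proof, and your direct verification is precisely the argument it leaves implicit --- the partition, finiteness and priority function restrict harmlessly, and totality of $G \cap C$ follows because strong connectivity of $G \cap C$ gives each $v \in C$ a path $v \to^* v$ inside $C$ whose first edge is an outgoing edge within $C$. Your remark on the singleton case is also the right reading of the paper's definition (transitive, not reflexive-transitive, closure), under which a one-vertex component must carry a self-loop, so no extra hypothesis is needed.
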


Henceforth, we assume that $G$ is a parity game (\ie its edge relation is
total), and $\i$ denotes an arbitrary player. We write $\pnot{\i}$ for $\i$'s
opponent; \ie $\pnot{\even}=\odd$ and $\pnot{\odd}=\even$.
A sequence of vertices $v_1, \ldots, v_n$ is a \emph{path} if $v_m \to
v_{m+1}$ for all $1 \leq m < n$.  Infinite paths are defined in a similar
manner. We write $p_n$ to denote the $n^\textrm{th}$ vertex in a path $p$.

A game starts by placing a token on a vertex $v \in V$.  Players move the
token indefinitely according to a simple rule: if the token is
on some vertex $v \in V_{\i}$, player $\i$ gets to move the token to an
adjacent vertex.  The choice where to move the token next is determined
by a partial function $\sigma \oftype V^+ \to V$, called a \emph{strategy}.
Formally, a strategy $\sigma$ for player $\i$ is a function satisfying
that whenever it is defined for a finite path $v_1, \ldots, v_n$, we
have $\sigma(v_1,\ldots,v_n) \in \{w \in V ~|~ v \to w\}$ and
$v_n \in V_{\i}$. We say that an infinite path $v_1, v_2, \ldots$
is \emph{consistent} with a strategy $\sigma$ for player $\i$ if
for all finite prefixes $v_1,\ldots,v_n$ for which $\sigma$ is defined, we have
$\sigma(v_1,\ldots,v_n) = v_{n+1}$. An infinite path induced by strategies
for both players is called a \emph{play}.

The winner of a play is determined by the \emph{parity} of the \emph{highest}
priority that occurs infinitely often on it: player $\even$ wins if,
and only if this priority is even. That is, we here consider \emph{max} parity
games. Note that, alternatively, one could
demand that the \emph{lowest} priority that occurs infinitely often
along a play determines the winner; such games would be \emph{min} parity
games. 

A strategy $\sigma$ for player $\i$ is \emph{winning} from a vertex $v$
if and only if $\i$ is the winner of every play starting in $v$ that is
consistent with $\sigma$. A vertex is won by $\i$ if $\i$ has a winning
strategy from that vertex. Note that parity games are \emph{positionally
determined}, meaning that a vertex is won by player $\i$ if $\i$ has a
winning \emph{positional strategy}: a strategy that determines where
to move the token next based solely on the vertex on which the token
currently resides. Such strategies can be represented by a function
$\sigma\oftype V_{\i} \to V$. A consequence of positional determinacy is
that vertices are won by exactly one player~\cite{EJ:91}.  \emph{Solving}
a parity game essentially is computing the partition $(W_\even,W_\odd)$
of $V$ of vertices won by player $\even$ and player $\odd$, respectively.
We say that a game $G$ is a \emph{paradise} for player $\i$
if all vertices in $G$ are
won by \i.


\paragraph{Special games.}

Parity games pop up in a variety of practical problems. These
include model checking problems for fixed point logics~\cite{EJ:91},
behavioural equivalence checking problems~\cite{CPPW:07} and satisfiability
and synthesis
problems~\cite{AVW:03}.  In many cases, the parity games underlying
such problems are \emph{special games}: parity games with a particular
structure. We here consider three such special games: \emph{weak, dull}
and \emph{nested solitaire} games; these classes have previously been studied
in the literature, see \eg~\cite{BG:04} and the references therein. The
definitions that we present here are taken from~\cite{BG:04}.\\

Weak games are game graphs in which the priorities along paths are
monotonically descending (this is not to be confused with parity games
with \emph{weak parity} conditions). That is, for each pair of vertices
$v,w$ in the graph, if $v \to w$, then $\prio{v} \ge \prio{w}$.
Such games correspond naturally
to model checking problems for the alternation-free modal $\mu$-calculus.

\begin{definition} A parity game is \emph{weak} if the priorities 
along all paths are descending.

\end{definition}

Dedicated solvers for weak games can solve these in $\mc{O}(|V|+|E|)$.
The algorithm that does so is rather straightforward. Since parity
games are total, the set $L$ of vertices with lowest priorities $m$ are
immediately won by player $\even$ iff $m$ is even.  Any vertex in the
game that can be \emph{forced} to $L$ by the player winning $L$ can then
be removed from the game; technically, this is achieved by computing the
\emph{attractor set} (see the next section) into $L$. What remains is
another weak parity game which can be solved following the same steps
until no vertex is left.\\

Weak games are closely related to dull games: the latter are game graphs
in which all \emph{basic cycles} in the graph are disjoint.  A basic
cycle is a finite path $v_1,\ldots,v_n$ for which $v_n \to v_1$ and no
vertex $v_i$ occurs twice on the path. An \emph{even} cycle is a cycle
in which the dominating (\ie highest) priority is even; the cycle is an
\emph{odd} cycle if the dominating priority occurring on the cycle is odd.

\begin{definition} A parity game is \emph{dull} if even cycles and odd
cycles are disjoint.

\end{definition}

Note that every weak game is dull; every dull game, on the other hand,
can be converted in linear time to a weak game by changing priorities
only. This is achieved by assigning a priority that has the same parity
as the highest priority present in a strongly connected component to all
vertices in that component.  This is harmless as each strongly connected
component is either entirely even dominated or entirely odd dominated:
if not, even cycles and odd cycles would overlap, contradicting the
fact that the game is dull.  Working bottom-up, it is straightforward to
ensure that the priorities along the paths in the game are descending.
As a result, dull games can also be solved in $\mc{O}(|V| + |E|)$ using
the same algorithm as that for solving weak games. Dull games, too, can
be obtained from alternation-free $\mu$-calculus model checking problems,
and they correspond naturally to the alternation-free fragment of LFP,
see~\cite{BG:01}.\\

Solitaire games are games in which only one of the two players gets to
make non-trivial choices where to play the token next; nested solitaire
games generalise solitaire games to games in which both players may make 
non-trivial
moves, but the interactions between both players is still restricted. Such
games arise from model checking problems for the fragment $L_2$ of the
modal $\mu$-calculus, see~\cite{EJS:01}, and they correspond with the
solitaire fragment of LFP~\cite{BG:04}.  

\begin{definition} A parity game is \emph{solitaire} if all non-trivial
moves are made by a single player. The game is \emph{nested solitaire}
if each strongly connected component induces a solitaire game.

\end{definition}

Nested solitaire games can be solved in $\mc{O}(\log(d) \cdot (|V| +
|E|))$, see~\cite{GK:05}, although most implementations use a somewhat
less optimal implementation that runs in $\mc{O}(d \cdot (|V| + |E|))$,
see~\cite{BG:04}. The latter algorithm works by computing the strongly
connected components of a graph and start searching for an even cycle if
all non-trivial moves in the component are made by player $\even$ and
an odd cycle otherwise.
Computing whether there is an even cycle
(resp.\ odd cycle) can be done in $\mc{O}(\log(d) \cdot (|V| + |E|))$
using the techniques of~\cite{KKV:01} or, in $\mc{O}(d \cdot (|V| +
|E|))$ by repeatedly conducting a depth-first search, starting at the
lowest even priority in the component. Clearly, in a component where
only player $\even$ gets to make non-trivial moves,
all vertices are won by
player $\even$ iff an even cycle is found. Iteratively solving the final
strongly connected component and removing it together with the attractor
for the winner of this component solves entire nested solitaire games.

\section{Zielonka's Recursive Algorithm}
\label{sec:zielonka}

Throughout this section, we assume $G$ is a fixed parity game $(V,E,
\priosym, (V_\even, V_\odd))$, and $\i$ is an arbitrary player. 

Zielonka's algorithm for solving parity games, listed as
Algorithm~\ref{alg:zielonka}, is a divide and conquer algorithm. It
constructs winning regions for both players out of the solution of
subgames with fewer different priorities and fewer vertices. It removes
the vertices with the highest priority from the game, together with all
vertices \emph{attracted} to this set of vertices. Attractor sets are
formally defined as follows.
\begin{definition} The $\i$-\emph{attractor} into a set $U \subseteq V$,
denoted $\attr{\i}{U}$, is defined
inductively as follows:
\[
\begin{array}{lcl}
\attr[0]{\i}{U} & = & U \\
\attr[n+1]{\i}{U} & = & \attr[n]{\i}{U} \\
      & \cup & \{u \in V_{\i} ~|~ \exists v \in \attr[n]{\i}{U}:~ u \to v\} \\
      & \cup & \{u \in V_{\pnot{\i}} ~|~ \forall v \in V: u \to v \implies v \in \attr[n]{\i}{U}\} \\
\attr{\i}{U} & = & \bigcup\limits_{i \ge 0} \attr[i]{\i}{U}
\end{array}
\]
If needed for clarity, we write $\attr[G]{\i}{U}$ to indicate 
that the $\i$-attractor is computed in game graph $G$.
\end{definition}

The lemma below states that whenever attractor sets are removed
from a parity game, totality is preserved.

\begin{lemma} Let $A = \attr{\i}{U} \subseteq V$ be an arbitrary attractor
set. If $G$ is a parity game, then so is $G \setminus A$.

\end{lemma}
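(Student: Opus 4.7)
The plan is to show that removing the attractor set $A$ preserves totality of the edge relation, since $G \setminus A$ automatically inherits the other structural requirements (finite vertex set, edges contained in $(V\setminus A)^2$, priority function and ownership partition via restriction) from the definition of $G \cap (V \setminus A)$. So the whole task reduces to verifying: for every $v \in V \setminus A$, there is some $w \in V \setminus A$ with $v \to w$.

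I would fix an arbitrary $v \in V \setminus A$ and split on ownership. If $v \in V_{\i}$, I would argue by contradiction: suppose every successor of $v$ lies in $A$. Since $G$ is a parity game, $v$ has at least one successor, hence at least one successor in $A$. But the $\i$-attractor construction adds any $\i$-owned vertex that has even a single successor in the current attractor, so there is some finite stage $n$ with $v \in \attr[n+1]{\i}{U} \subseteq \attr{\i}{U} = A$, contradicting $v \notin A$. Therefore some successor of $v$ must lie in $V \setminus A$.

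If $v \in V_{\pnot{\i}}$, I would argue directly from the attractor's universal clause for opponent vertices: were all successors of $v$ in $A$, then $v$ would itself be added at some finite stage, again contradicting $v \notin A$. Hence at least one successor lies outside $A$, as required.

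I don't anticipate a genuine obstacle here; the argument is essentially a reading-off of the attractor definition, and the only subtle point is making sure the base case of the inductive attractor stages is handled properly, which follows because both closure conditions are applied uniformly at every stage and $A$ is the union of all stages. Once both ownership cases are dispatched, totality of the edge relation of $G \setminus A$ follows immediately, and all other parity-game requirements are preserved by restriction, completing the proof.
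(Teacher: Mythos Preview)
Your argument is correct: the case split on ownership together with the closure clauses of the attractor definition immediately yields totality of $G \setminus A$, and the remaining parity-game data are inherited by restriction. Note that the paper itself states this lemma without proof, treating it as a standard fact, so there is no paper proof to compare against; your write-up would serve perfectly well as the missing justification. One minor sharpening: in the $V_{\i}$ case you actually get the stronger conclusion that \emph{no} successor of $v$ lies in $A$ (a single successor in $A$ already forces $v \in A$), though the weaker statement you prove is all that is needed for totality.
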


The correctness of 
Zielonka's algorithm hinges on the fact that higher priorities in the game
dominate lower priorities, and that any forced revisit of these higher
priorities is beneficial to the player aligning with the parity of the
priority.  For a detailed explanation of the algorithm and proof of
its correctness, we refer to~\cite{Zie:98,Fri:11}.

\begin{algorithm}[ht]
\caption{Zielonka's Algorithm}
\label{alg:zielonka}
\begin{algorithmic}[1]
\Function{Zielonka}{G}
\If{$V = \emptyset$}
     \State $(W_\even,W_\odd) \gets (\emptyset,\emptyset)$
\Else
     \State $m \gets \max\{ \prio{v} ~|~ v \in V\}$
     \State \textbf{if} $m\!\!\!\mod~2 = 0$ \textbf{then} $p \gets \even$
            \textbf{else} $p \gets \odd$ \textbf{end if} 
     \State $U \gets \{v \in V ~|~ \prio{v} = m\}$
     \State $A \gets \attr{p}{U}$
     \State $(W'_\even, W'_\odd) \gets \Call{Zielonka}{G\setminus A}$
     \If{$W'_{\pnot{p}} = \emptyset$}
           \State $(W_{p}, W_{\pnot{p}}) \gets (A \cup W'_{p}, \emptyset)$
     \Else
           \State $B \gets \attr{\pnot{p}}{W'_{\pnot{p}}}$
           \State $(W'_\even, W'_\odd) \gets \Call{Zielonka}{G\setminus B}$
           \State $(W_{p},W_{\pnot{p}}) \gets (W'_{p},W'_{\pnot{p}} \cup B)$
     \EndIf
\EndIf
\State \Return $(W_\even, W_\odd)$
\EndFunction
\end{algorithmic}
\end{algorithm}

\section{Solving Special Games}
\label{sec:special_games}

Zielonka's algorithm is quite competitive on parity games that stem
from practical verification problems~\cite{FL:09,Kei:09}, often beating
algorithms with better worst-case running time. While Zielonka's
original algorithm is known to run in exponential time on games defined
by Friedmann~\cite{Fri:11}, its behaviour on special parity games
has never before been studied. It might just be the case that this
algorithm is particularly apt to solve such games. We partly confirm
this hypothesis in Section~\ref{sec:weak_zielonka} by proving that
the algorithm indeed runs in polynomial time on weak games. Somewhat
surprisingly, however, we also establish that Zielonka's algorithm
performs poorly when solving dull and (nested) solitaire games, see
Section~\ref{sec:dull_and_solitaire_zielonka}.

\subsection{Weak Games}
\label{sec:weak_zielonka}

We start with a crucial observation ---namely, that for weak games,
\ziel solves a paradise in polynomial time--- which permits us to prove that
solving weak games can be done in polynomial time using \ziel. The proof
of the latter, formalised as Proposition~\ref{prop:paradise}, depends on 
three observations, which we first prove in isolation in the following
lemma.
\begin{lemma}
\label{lem:paradise} Let $G = (V,E,\priosym,(V_\even,V_\odd))$ be a weak
parity game. Suppose $G$ is a paradise for player $\i$; \ie, $G$ is
won entirely by $\i$. Then \ziel, applied to $G$, has the following properties:
\begin{enumerate}
\item in the first recursive call in line $9$, the argument $G \setminus A$ is also a paradise for player $\i$.
\item if the second recursive call (line $14$) is reached, then its argument ($G \setminus B$) is the empty set.
\item edges that are used in the computation of attractor sets (lines $8$ and $13$) are not considered in subroutines.
\end{enumerate}

\end{lemma}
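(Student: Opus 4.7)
The lynchpin observation is that in any weak parity game, for either player $q$ and with $U = \{v \in V \mid \prio{v} = m\}$ the set of maximum-priority vertices, one has $\attr{q}{U} = U$. This follows by a straightforward induction on the attractor rank: if the attractor at stage $n$ sits inside $U$, any edge $u \to v$ with $v$ in that attractor forces $\prio{u} \geq m$ by the descending-priority property, so $u \in U$; hence no vertex outside $U$ is ever added. In particular $A = U$, and, symmetrically, no infinite path that begins in $V \setminus U$ can ever visit $U$.

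For item~(1), fix $v \in V \setminus U$ and let $\sigma$ be a winning strategy of $\i$ in $G$ from $v$. Any play of $G \setminus A = G \setminus U$ consistent with $\sigma$ restricted to $V \setminus U$ stays in $V \setminus U$ (by the path observation above, and because $\pnot{\i}$ gains no new options in $G \setminus U$) and so is also a play of $G$ consistent with $\sigma$; since the sequence of priorities along it is the same in both games, $\sigma$ remains winning. Hence $G \setminus A$ is a paradise for $\i$.

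For item~(2), item~(1) tells us that the first recursive call returns $W'_{\i} = V \setminus U$ and $W'_{\pnot{\i}} = \emptyset$; the second call is therefore entered only when $p = \pnot{\i}$ and $V \neq U$, and it computes $B = \attr{\i}{V \setminus U}$. I claim $B = V$. Suppose $T := U \setminus B$ were non-empty; because $V \setminus U \subseteq B$, we have $V \setminus B = T$. Unwinding the attractor definition then shows that every $t \in T \cap V_{\i}$ has all successors in $T$, while every $t \in T \cap V_{\pnot{\i}}$ has at least one successor in $T$. Thus $T$ is total and $\pnot{\i}$ has a positional strategy confining play to $T \subseteq U$ forever, which makes priority $m$ (of parity $\pnot{\i}$) occur infinitely often. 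So $\pnot{\i}$ would win every vertex of $T$ in $G$, contradicting the paradise assumption; hence $B = V$ and $G \setminus B = \emptyset$.

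For item~(3), a backwards-BFS implementation of $\attr{p}{U}$ only inspects edges incident to vertices in the current attractor, which by the key observation is contained in $U = A$; every such edge is therefore incident to $A$ and is absent from $G \setminus A$. The analogous claim for $B$ holds trivially by item~(2). The most delicate step is the closure argument in item~(2): one must verify totality of the subgame induced by $T$ and extract the winning positional strategy for $\pnot{\i}$, both of which fall out of the attractor definition once the identity $V \setminus B = T$ is made explicit.
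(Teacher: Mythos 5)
Your argument follows the paper's proof in all essentials: the key observation that weakness forces $\attr{q}{U}=U$ (so $A=U$ and no edge leaves $V\setminus U$ into $U$), the restriction of an $\i$-winning strategy to settle item (1), and for item (2) the fact that the complement of $B$ would be a trap contained in $U$ in which $\pnot{\i}$ confines the play to the dominant priority $m$ of its own parity, contradicting the paradise assumption. Your trap argument is in fact a more explicit rendering of the paper's one-line claim that $A\subseteq\attr{\pnot{p}}{W'_{\pnot{p}}}$, so items (1) and (2) are fine.

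The one place you fall short of the stated claim is item (3). The lemma asserts that the edges used at line 13 are not considered in the subroutines, and the way this is used (the accounting $\sum_i e_i=|E|$ in Proposition~\ref{prop:paradise}) requires in particular that they are not re-examined inside the \emph{first} recursive call; your appeal to item (2) only disposes of the (empty) second call. The missing half-sentence is the paper's remark that these edges, too, have sources only in $U$: at line 13 every vertex of $W'_{\pnot{p}}=V\setminus U$ is already in the attractor, so the only vertices whose membership must still be decided lie in $U$, hence the only edges whose examination does any work originate in $U$, and every such edge is incident to $A=U$ and therefore absent from the game $G\setminus A$ handed to the first call. With that observation added, your proof coincides with the paper's.
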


\begin{proof} We prove all three statements below.
 \begin{enumerate}
  \item Observe that $A = \attr{p}{U} = U$, since, in a weak game,
  no vertex with lower priority has an edge to a vertex in $U$. In
  particular, the subgame $G \setminus A$ is $\pnot{\i}$-closed, and
  hence must be won entirely by $\i$, if $G$ is a $\i$-paradise.

  \item The second recursive call can be invoked only if $W'_{\pnot{p}}
  \neq \emptyset$. From the above considerations we know that this
  implies $\pnot{p} = \i$, and $W'_{\pnot{p}} = G \setminus A$ is a
  paradise for $\i$. We also have $G = W'_{\pnot{p}} \cup A$. Since
  every game staying in $A$ would be losing for $\i$, it must be the
  case that $A \subseteq Attr_{\pnot{p}}(W'_{\pnot{p}})$. But then $B
  = Attr_{\pnot{p}}(W'_{\pnot{p}}) = G$, and hence $G \setminus B =
  \emptyset$.

  \item Edges that are considered in the computation of both $\attr{p}{U}$
  (line $8$) and $\attr{\pnot{p}}{W'_{\pnot{p}}}$ have sources only in
  $U$; since no vertices from $U$ are included in the subgame
  considered in the first recursive call, and the second call can only
  take the empty set as an argument. Therefore, these edges will not be
  considered in the subroutines.

  \end{enumerate}
\end{proof}

\begin{proposition}
\label{prop:paradise} Let $G = (V,E,\priosym,(V_\even,V_\odd))$ be a weak
parity game. Suppose $G$ is a paradise for player $\i$; \ie, $G$ is
won entirely by $\i$. Then \ziel runs in $\mc{O}(|V| + |E|)$.

\end{proposition}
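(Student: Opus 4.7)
The plan is to prove this by induction on $|V|$, using the three properties established in Lemma \ref{lem:paradise} to reduce the behaviour of \ziel on a paradise to essentially a single linear scan through the game.

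First I would observe that the base case ($V = \emptyset$) is immediate. For the inductive step, Lemma \ref{lem:paradise}.1 guarantees that $A = U$ (since no lower-priority vertex can attract into $U$ in a weak game) and that the first recursive call is invoked on $G \setminus U$, which is again a weak paradise strictly smaller than $G$. By the induction hypothesis, that call takes time $\mc{O}(|V \setminus U| + |E_{G \setminus U}|)$ where $E_{G \setminus U}$ denotes the edges surviving in the subgame. Lemma \ref{lem:paradise}.2 then settles the cost of the second recursive call: it is either skipped entirely, or it is invoked on the empty game and terminates in constant time.

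The remaining task is to account for the cost of the non-recursive portion of each invocation, namely determining $m$, extracting $U$, and computing the attractors $\attr{p}{U}$ and (possibly) $\attr{\pnot{p}}{W'_{\pnot{p}}}$. Lemma \ref{lem:paradise}.3 is the crucial ingredient here: every edge consumed by one of these attractor computations is removed from consideration in all subordinate calls. Hence, summed over the full recursion, the attractor computations cost $\mc{O}(|V| + |E|)$ in total, because each vertex is attracted (and thus removed from the game) exactly once and each edge is inspected for attractor purposes at most once.

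The only subtle point, and the place where I expect the main obstacle to lie, is the per-call cost of finding $m$: naively this is $\Theta(|V_i|)$ per level, which risks a quadratic blow-up over $\mc{O}(|V|)$ levels of recursion. The remedy is to note that, in a paradise, Lemma \ref{lem:paradise}.1 reduces \ziel to a deterministic sweep through the priority levels in decreasing order; a standard bucket-sort of $V$ by priority performed at the outer call (recalling that priorities are bounded by $|V|$) lets each subsequent maximum be extracted in time proportional to the number of vertices $|U|$ removed at that level. Combined with the amortised $\mc{O}(|V| + |E|)$ attractor cost established above, this yields the claimed $\mc{O}(|V| + |E|)$ overall bound.
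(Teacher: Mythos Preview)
Your proposal is correct and takes essentially the same approach as the paper: both invoke the three properties of Lemma~\ref{lem:paradise}, amortise the attractor work so that each edge is charged at most once over the whole recursion, and rely on the assumption that vertices can be accessed in priority order (the paper simply postulates such a representation; you realise it via a bucket sort). The paper phrases the bookkeeping as a direct summation of per-level costs indexed by the number of remaining priorities rather than as an induction on $|V|$, but the substance is identical.
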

\begin{proof}
We analyse the running time $T(k)$ of \ziel when it is called on a
subgame $G_k$ of $G$ with exactly $k$ priorities. Let $v_k$ denote
the number of nodes with the highest priority in $G_k$, and with $e_k$
the number of edges that are considered in the attractor computations
(lines $8$ and $13$) on $G_k$.

If we assume that the representation of the game has some built-in functionality that allows us to inspect the
nodes in the order of priority, then the time required to execute the specific lines of the procedure can be bounded as follows:
\begin{itemize}
 \item line $7$: $c \cdot v_k$ for some constant $c$
 \item lines $8$ and $13$ in total: $c \cdot e_k$ for some constant $c$
 \item the remaining lines: $z$ for some constant $z \in \nat$
\end{itemize}
We obtain:
\[
\begin{array}{lll}
 T(k) & \leq & c \cdot (v_i + e_i + z) + T(k-1)\\ 
 T(k) & \leq & \sum\limits_{i=1}^{k}\ c \cdot (v_i + e_i + z)\\
 T(k) & \leq & c \cdot ( \sum\limits_{i=1}^{k} v_i + \sum\limits_{i=1}^{k} e_i + \sum\limits_{i=1}^{k} z) \\ 
\end{array}
\]
Let $d$ denote the total number of priorities occurring in
$G$. Observe that from Lemma~\ref{lem:paradise}, we have:
\[\sum\limits_{i=1}^{d} v_i = |V| \text{ and }
\sum\limits_{i=1}^{d} e_i = |E|
\]
The
total execution time of \ziel on $G$ can be bounded by:
\[
 T(d,V,E) \,\leq\, c \cdot (|V| + |E| + {\cal O}(d))\\
 \]
Hence we obtain
$
 T(d,V,E) \,=\, {\cal O} (|V| + |E|)
$.
\end{proof}

The above proposition is used in our main theorem below to prove
that solving weak games using \ziel can be done in polynomial time:
each second recursive call to \ziel will effectively be issued on
a paradise or an empty game. By proposition~\ref{prop:paradise},
we know that \ziel will solve a paradise in linear time. 

\begin{theorem}
\ziel requires $\mc{O}(d \cdot (|V| + |E|))$ to solve weak games with
$d$ different priorities, $|V|$ vertices and $|E|$ edges.
\end{theorem}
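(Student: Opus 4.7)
The plan is to proceed by induction on $d$, the number of distinct priorities occurring in the weak game $G$. The base case, when $d = 0$, is trivial since the game is empty and \ziel returns immediately. For the inductive step, I would analyse the work performed by one top-level invocation of \ziel on a weak game with $d$ priorities, and bound the two recursive calls separately.

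For the first recursive call, I would observe that $A \supseteq U$ contains all vertices of the highest priority $m$, so $G \setminus A$ is a weak parity game with at most $d-1$ priorities (removing vertices cannot create ascending edges). The inductive hypothesis then bounds this call by $O((d-1)(|V|+|E|))$, and the work at the current level (computing $U$ and the attractor $A$ on lines $7$ and $8$, plus constant-time bookkeeping) adds only $O(|V|+|E|)$.

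The crux is bounding the second recursive call, and this is exactly where Proposition~\ref{prop:paradise} is meant to be applied. I would invoke the correctness of \ziel to argue that $W'_{\pnot{p}}$ is won by $\pnot{p}$ in $G \setminus A$, and that the attractor $B = \attr{\pnot{p}}{W'_{\pnot{p}}}$ is won by $\pnot{p}$ in $G$. Consequently, every vertex of $G \setminus B$ is won by $p$ in $G$, and since $V \setminus B$ is $\pnot{p}$-closed by the very definition of the attractor, a positional winning strategy for $p$ on $V \setminus B$ in $G$ restricts to a winning strategy in the subgame $G \setminus B$. Thus $G \setminus B$ is a paradise for $p$, and it is obviously still a weak game, so Proposition~\ref{prop:paradise} yields a bound of $O(|V|+|E|)$ for this call.

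Combining the two calls produces the recurrence $T(d) \leq T(d-1) + O(|V|+|E|)$, which unwinds directly to $O(d \cdot (|V|+|E|))$. The main obstacle I anticipate is the careful justification that $G \setminus B$ is a paradise: this hinges on the correctness of \ziel and on the closure property of the attractor ensuring that $p$'s winning strategy in $G$ survives the restriction to $G \setminus B$. Everything else is routine: the inductive step adds only linear work on top of the inductive hypothesis for the first call and the linear-time bound from Proposition~\ref{prop:paradise} for the second.
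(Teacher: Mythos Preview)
Your overall plan matches the paper's: induct on $d$, bound the first recursion by the inductive hypothesis, and invoke Proposition~\ref{prop:paradise} for the second after showing that $G\setminus B$ is a $p$-paradise. The gap lies in how you justify that paradise claim. From ``$B$ is won by $\pnot{p}$ in $G$'' you infer ``consequently, every vertex of $G\setminus B$ is won by $p$ in $G$''. That inference is invalid in general: the $\pnot{p}$-winning region of $G$ may strictly contain $B$. For a concrete witness, run one level of the algorithm on $\whitegame[3]$ from Section~\ref{sec:tightness}: the top priority is $4$, so $p=\even$; one computes $A=\{v_3\}$, then $W'_\odd=\{u_3,w_3\}$ in $G\setminus A$, and $B=\attr{\odd}{\{u_3,w_3\}}=\{v_2,v_3,u_2,u_3,w_3\}$. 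The residual game $G\setminus B$ on $\{v_1,u_1,w_1,w_2\}$ is won entirely by $\odd$, so it is a $\pnot{p}$-paradise, not a $p$-paradise. Nothing in your argument uses weakness at this step, yet weakness is precisely what makes the paradise claim true.

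The paper closes this step using the weak hypothesis directly. Since priorities are non-increasing along edges, no vertex of lower priority has an edge into $U$; hence $A=U$, and every vertex of $W'_p$ has in $G$ exactly the same successors it had in $G\setminus A$. Therefore $p$'s winning strategy on $W'_p$ from the first recursion is still winning in $G$, which gives $W'_p\cap B=\emptyset$ and $V\setminus B = W'_p \cup (U\setminus B)$. Any play confined to $V\setminus B$ then either eventually stays in $W'_p$ (where $p$ wins by that strategy) or revisits $U$ infinitely often (and $m$ has $p$'s parity), so $G\setminus B$ is indeed a $p$-paradise. Once you insert this weakness-specific reasoning in place of the ``consequently'', the rest of your recurrence goes through exactly as you wrote.
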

\begin{proof}
The key observation is that \ziel, upon entering the second recursive
call in line 14 is invoked on a game that is a paradise. Consider the
set of vertices $V \setminus B$ of the game $G$ at that point. It
contains the entire set $W'_p$, and possibly a subset of $U$. Now,
if player $\pnot{p}$ could force a play in a node $v \in W'_p$ to
$W'_{\pnot{p}}$, it could be done only via set $U$. But this would
violate the weakness property. Player $p$ has a winning strategy on
$V \setminus B$, which combines the existing strategy on $W'_{p}$ and
if necessary any strategy on $U$ (because whenever a play visits $U$
infinitely often, it is won by $p$). Thus, the game $G \setminus B$
that is then considered is a $p$-paradise.

As a result, by Proposition~\ref{prop:paradise}, the game $G \setminus B$
is solved in $\mc{O}(|V| + |E|)$. Based on these observations, we obtain the
following recurrence for \ziel:
\[
\begin{array}{ll}
T(0,V,E) & \le \mc{O}(1) \\
T(d+1,V,E) & \le T(d,V,E) + \mc{O}(|V|) + \mc{O}(|E|)
\end{array}
\]
Thus, a non-trivial upper bound on the complexity is $\mc{O}(d \cdot
(|V| + |E|))$.
\end{proof}
Next, we show this bound is tight.  Consider the family of parity
games $\weak[n] = (V^n, E^v, \priosym^n, (V^n_\even, V^n_\odd))$, where
priorities and edges are defined in Table~\ref{tab:weak_family} and $V^n$
is defined as $V^n = \{v_1, \ldots, v_{2n}, u_0, u_1\}$.

\begin{table}[h]
\centering
\small
\caption{The family $\weak$ of games; $1 \le i \le n$.}
\label{tab:weak_family}
\begin{tabular}{c|c|c|c}
Vertex & Player &  Priority & Successors \\
\hline
$v_i$ & $\even$ & $i+2$ & $\{v_{i-1}, v_{n+i} \} \cup \{u_0 ~|~ i = 1 \}$\\
$v_{n+i}$ & $\odd$ & $i+2$ & $\{v_{i}, v_{n+i-1} \} \cup \{u_1~|~i=1\}$\\
$u_0$ & $\even$ & $0$ & $\{u_0\}$ \\
$u_1$ & $\odd$ & $1$ & $\{u_1 \}$\\
\end{tabular}
\end{table}
\noindent
The game $\weak[4]$ is depicted in Figure~\ref{fig:weak_5}.
\begin{figure}[h]
\centering
\begin{tikzpicture}[>=stealth']
\tikzstyle{every node}=[draw, inner sep=1pt, outer sep=1pt,minimum size=8pt];
  \node[shape=diamond,label=above:{$v_4$}] (y5)               {\scriptsize 6};
  \node[shape=diamond,label=above:{$v_3$}] (y4) [right of=y5,xshift=30pt] {\scriptsize 5};
  \node[shape=diamond,label=above:{$v_2$}] (y3) [right of=y4,xshift=30pt] {\scriptsize 4};
  \node[shape=diamond,label=above:{$v_1$}] (y2) [right of=y3,xshift=30pt] {\scriptsize 3};
  \node[shape=diamond,label=above:{$u_0$}] (y1) [right of=y2,xshift=30pt] {\scriptsize 0};

  \node[shape=rectangle,label=below:{$v_8$}] (x5) [below of=y5] {\scriptsize 6};
  \node[shape=rectangle,label=below:{$v_7$}] (x4) [below of=y4] {\scriptsize 5};
  \node[shape=rectangle,label=below:{$v_6$}] (x3) [below of=y3] {\scriptsize 4};
  \node[shape=rectangle,label=below:{$v_5$}] (x2) [below of=y2] {\scriptsize 3};
  \node[shape=rectangle,label=below:{$u_1$}] (x1) [below of=y1] {\scriptsize 1};

\path[->]
  (y5) edge (y4) edge[bend left] (x5)
  (x5) edge (x4) edge[bend left] (y5)
  (y4) edge (y3) edge[bend left] (x4)
  (x4) edge (x3) edge[bend left] (y4)
  (y3) edge (y2) edge[bend left] (x3)
  (x3) edge (x2) edge[bend left] (y3)
  (y2) edge (y1) edge[bend left] (x2)
  (x2) edge (x1) edge[bend left] (y2)
  (x1) edge[loop right] (x1)
  (y1) edge[loop right] (y1)
;
\end{tikzpicture}
\caption{The game $\weak[4]$.}
\label{fig:weak_5}
\end{figure}
The family $\weak$ has the following characteristics.
\begin{proposition} The game $\weak[n]$ is of size linear in $n$; \ie,
$|\weak[n]| = \mc{O}(n)$, it contains $2n + 2$ vertices, $4n +2$ edges and
$n+2$ different priorities. Moreover, the game $\weak[n]$ is a weak game.
\end{proposition}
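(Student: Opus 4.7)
The plan is to verify each of the four numerical claims and the weakness property by direct inspection of Table~\ref{tab:weak_family}, since nothing in the statement goes beyond a count and a syntactic check.

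First, I would read off the vertex count. Since $V^n = \{v_1,\ldots,v_{2n},u_0,u_1\}$ and all elements listed are distinct, $|V^n| = 2n+2$. For the priorities, the vertices $v_1,\ldots,v_n$ take the values $3,4,\ldots,n+2$, the vertices $v_{n+1},\ldots,v_{2n}$ reuse exactly the same range, and $u_0, u_1$ add the two fresh values $0$ and $1$. Hence the image of $\priosym^n$ is $\{0,1,3,4,\ldots,n+2\}$, which has $n+2$ elements.

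Next, I would tally the edges by running through the four rows of the table. Each $v_i$ with $2 \le i \le n$ has two successors $v_{i-1}, v_{n+i}$, contributing $2(n-1)$ edges; the boundary vertex $v_1$ also has two successors $v_{n+1}$ and $u_0$, contributing $2$ more. The opponent's side is entirely symmetric: the vertices $v_{n+i}$ with $2 \le i \le n$ contribute $2(n-1)$ edges, and $v_{n+1}$ adds $2$. Finally, the two self-loops at $u_0$ and $u_1$ contribute $2$ edges. Summing gives $2(n-1)+2+2(n-1)+2+2 = 4n+2$, and $|\weak[n]| = \mc{O}(n)$ then follows from the $\mc{O}(n)$ vertex and edge counts.

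The only slightly less mechanical step is checking weakness. It is enough to verify $\prio{v} \ge \prio{w}$ for every edge $v \to w$, because monotone descent along edges propagates to all paths. Case by case: an edge $v_i \to v_{i-1}$ (or $v_{n+i}\to v_{n+i-1}$) drops the priority from $i+2$ to $i+1$; an edge $v_i \to v_{n+i}$ or $v_{n+i}\to v_i$ preserves it at $i+2$; the edges $v_1 \to u_0$ and $v_{n+1} \to u_1$ go from priority $3$ down to $0$ or $1$; the two self-loops trivially preserve priority. All checks succeed. There is no real obstacle here; the only minor point of care is the boundary $i=1$, where the ``$v_{i-1}$'' option in the table must be interpreted as absent and replaced by the edge into $u_0$ (respectively $u_1$) provided by the $\{u_0 \mid i = 1\}$ clause, so that $v_1$ and $v_{n+1}$ still have at least one successor and totality is preserved.
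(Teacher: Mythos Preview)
Your proof is correct. The paper states this proposition without proof, treating all five claims as immediate from inspection of Table~\ref{tab:weak_family}; your explicit verification of the counts and the edge-by-edge monotonicity check (including the careful handling of the $i=1$ boundary, which the figure for $\weak[4]$ confirms) is exactly the routine argument the paper leaves implicit.
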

\begin{lemma} In the game $\weak[n]$, vertices $\{u_0,v_1,\ldots,v_n\}$ are
won by player $\even$, whereas vertices $\{u_1,v_{n+1},\ldots,v_{2n}\}$
are won by player $\odd$.
\end{lemma}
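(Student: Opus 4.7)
The plan is to explicitly exhibit winning positional strategies, one per player, and verify directly that they are winning from the corresponding sets of vertices. Positional determinacy then forces the two regions to coincide with the winning partition.

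First, I would define the strategy $\sigma_\even$ for player $\even$ on the set $\{u_0, v_1, \ldots, v_n\}$ by
\[
\sigma_\even(v_i) = v_{i-1} \text{ for } 2 \le i \le n, \quad \sigma_\even(v_1) = u_0, \quad \sigma_\even(u_0) = u_0.
\]
The key observation is that every vertex in $\{u_0, v_1, \ldots, v_n\}$ is owned by $\even$, so once the token sits in this set and $\even$ plays $\sigma_\even$, the opponent never gets a turn. Therefore, starting from any $v_j$ with $1 \le j \le n$, the (unique) play consistent with $\sigma_\even$ is the finite descent $v_j, v_{j-1}, \ldots, v_1, u_0$ followed by the $u_0$-self-loop; starting from $u_0$ the play loops at $u_0$ immediately. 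In every case $u_0$ is the only vertex visited infinitely often, and its priority $0$ is even, so $\even$ wins every such play. Hence $\sigma_\even$ is winning from each of $u_0, v_1, \ldots, v_n$.

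Next, by an entirely symmetric construction, I would define $\sigma_\odd$ on $\{u_1, v_{n+1}, \ldots, v_{2n}\}$ by
\[
\sigma_\odd(v_{n+i}) = v_{n+i-1} \text{ for } 2 \le i \le n, \quad \sigma_\odd(v_{n+1}) = u_1, \quad \sigma_\odd(u_1) = u_1,
\]
and argue in the same way: the set is entirely $\odd$-owned, so the play is deterministic and eventually traps in $u_1$, whose priority $1$ is odd.

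Finally, I would close by invoking positional determinacy: since every vertex is won by exactly one player and the two strategies above cover all $2n+2$ vertices, the winning regions coincide with $\{u_0, v_1, \ldots, v_n\}$ and $\{u_1, v_{n+1}, \ldots, v_{2n}\}$ respectively. There is no real obstacle here; the only thing one must check carefully is that each strategy is actually a valid strategy (every edge used by $\sigma_\even$ and $\sigma_\odd$ is indeed present, including the boundary cases $\sigma_\even(v_1) = u_0$ and $\sigma_\odd(v_{n+1}) = u_1$), which is immediate from the edge specification in Table~\ref{tab:weak_family}.
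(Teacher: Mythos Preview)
Your proof is correct and follows essentially the same approach as the paper: both exhibit the positional strategies that walk down the respective $v$-chain into the sink $u_0$ (resp.\ $u_1$) and loop there. Your version is a bit more explicit in noting that each set is owned entirely by one player (making the induced play deterministic) and in appealing to determinacy to conclude, but the underlying argument is the same.
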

\begin{proof}
Follows from the fact that, for $0 \le j < n-1$, the strategy $v_{n-j}
\to v_{n-j-1}$, $v_1 \to u_0$ and $u_0 \to u_0$ is
winning for player $\even$ for the set of vertices $\{u_0,v_11,\ldots,v_n\}$
and the strategy $v_{2n-j} \to v_{2n-j-1}$, $v_{n+1} \to u_1$ and $u_1
\to u_1$ is winning for player $\odd$ from the set of
vertices $\{u_1,v_{n+1},\ldots,v_{2n}\}$.
\end{proof}
We next analyse the runtime of Zielonka's algorithm on the family $\weak$.
Let $a_n$ be defined through the following recurrence relation:
\[
\begin{array}{ll}
a_0 & = 1 \\
a_{n+1} & = a_n + n + 1
\end{array}
\]
Observe that the function $\frac{1}{2}n^2$ approximates $a_n$ from below.
The proposition below states that solving the family $\weak$ of weak
parity games requires a quadratic number of recursions.
\begin{proposition} Solving $\weak[n]$, for $n > 0$, requires at least
$a_n$ calls to \ziel.
\end{proposition}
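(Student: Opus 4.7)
My plan is to prove this by induction on $n$, showing that the number of calls $T(n)$ that \ziel makes on $\weak[n]$ satisfies $T(n) \ge a_n$. The base case $n = 1$ is immediate: the outer call on the non-empty game $\weak[1]$ necessarily spawns at least one recursive call, giving $T(1) \ge 2 = a_1$.

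For the inductive step, I would first analyse the outer call on $\weak[n+1]$. The maximum priority $n + 3$ occurs only at $v_{n+1}$ and $v_{2(n+1)}$, and by the weak-game property no vertex of strictly smaller priority has an edge into this set; hence $A$ equals $U = \{v_{n+1}, v_{2(n+1)}\}$. Inspecting Table~\ref{tab:weak_family}, the subgame $\weak[n+1] \setminus A$ is, up to an obvious renaming, isomorphic to $\weak[n]$, so the first recursive call alone contributes at least $T(n) \ge a_n$ calls by the inductive hypothesis.

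Next I would show that the second recursive call is reached with a non-trivial argument. Using the winning-region lemma applied to the isomorphic copy of $\weak[n]$, the first recursion returns $W'_\even = \{u_0, v_1, \ldots, v_n\}$ and $W'_\odd = \{u_1, v_{n+2}, \ldots, v_{2n+1}\}$, both non-empty, so the \textbf{else} branch is taken regardless of the value of $p$. A short attractor computation then shows that $B$ consists of $W'_{\pnot p}$ together with exactly one of $\{v_{n+1}, v_{2(n+1)}\}$, leaving a chain of $n + 2$ vertices, all owned by the same player, as the argument of the second recursion. In every subsequent call on such a chain the vertex at the maximum priority has no in-edges within the chain, so its attractor equals itself and the call spawns a first recursive call on a chain shorter by one vertex; this produces at least $n + 3$ further calls, comfortably more than the $n + 1$ needed.

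Combining the three contributions yields $T(n+1) \ge 1 + T(n) + (n + 1) \ge a_n + n + 2 \ge a_{n+1}$, closing the induction. The step I expect to require the most care is the attractor bookkeeping: one must check, for each parity of $n + 1$, that no mid-chain vertex (in particular $v_{n+1}$ or $v_n$) is absorbed into the opponent's region during the construction of $B$, as this would shrink the paradise and threaten the count. That verification is local and follows routinely from the successor sets listed in Table~\ref{tab:weak_family}, but it is the only place where the weak-game structure is genuinely exploited.
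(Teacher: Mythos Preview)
Your argument is correct and follows the same inductive skeleton as the paper: identify $\weak[n+1]\setminus A$ with $\weak[n]$ to bound the first recursion, then observe that the second recursion lands on a single-owner chain (an $\i$-paradise) whose length forces linearly many further calls. In fact you are slightly more accurate than the paper's own sketch: the paper describes the argument of the second call as $\weak[n]\setminus\{v_{2n},\ldots,v_{n+1},u_1\}$ (respectively $\weak[n]\setminus\{v_n,\ldots,v_1,u_0\}$), a chain of $n+1$ vertices, whereas your attractor computation correctly shows that exactly one of the two top-priority vertices $v_{n+1},v_{2(n+1)}$ survives into $G\setminus B$, yielding a chain of $n+2$ vertices. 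Either count suffices for the bound, so this discrepancy is harmless, and the caution you flag about checking that no mid-chain vertex is absorbed into $B$ is exactly the right place to spend care.
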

\begin{proof}
Follows from the observation that each game $\weak[n+1]$ involves:
\begin{enumerate}
\item a first recursive call to \ziel for solving the game $\weak[n]$.
\item a second recursive call to \ziel for solving either $\weak[n]\setminus
\{v_{2n},\ldots,v_{n+1}, u_1\}$ or $\weak[n]\setminus
\{v_n,\ldots,v_1,u_0\}$; both require $n+1$ recursive calls to \ziel.
\end{enumerate}
\end{proof}
\begin{theorem} Solving weak games using \ziel requires
$\Theta(d \cdot (|V| + |E|))$.
\end{theorem}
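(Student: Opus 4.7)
The upper bound $\mc{O}(d \cdot (|V|+|E|))$ is the content of the earlier theorem, so it suffices to establish the matching lower bound $\Omega(d \cdot (|V|+|E|))$. The plan is to use the family $\weak[n]$ together with the preceding proposition on the number of recursive calls.

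First, I would combine the parameters of $\weak[n]$: the game has $|V| = 2n+2$ vertices, $|E| = 4n+2$ edges and $d = n+2$ distinct priorities, so $d \cdot (|V| + |E|) = (n+2)(6n+4) = \Theta(n^2)$. Second, by the previous proposition the number of invocations of \ziel needed to solve $\weak[n]$ is at least $a_n$, and the recurrence $a_0=1$, $a_{n+1} = a_n + n + 1$ resolves to $a_n \geq \tfrac{1}{2} n^2$, so the algorithm performs $\Omega(n^2)$ recursive calls. Since each call performs at least a constant amount of work (even if one ignores the cost of attractor computations and priority selection), this immediately yields a runtime of $\Omega(n^2)$ on $\weak[n]$.

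Putting the two estimates together gives a lower bound of $\Omega(d \cdot (|V|+|E|))$ on the family $\weak$, matching the upper bound established in the earlier theorem; hence the complexity of \ziel on weak games is $\Theta(d \cdot (|V| + |E|))$. The only mildly delicate point is to verify that the counting in the preceding proposition indeed translates into a corresponding lower bound on running time rather than merely on call count; this is immediate because each recursive invocation independently executes at least the top-level tests of Algorithm~\ref{alg:zielonka}, contributing at least a constant to the total cost.
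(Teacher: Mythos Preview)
Your argument is correct and is exactly the intended one: the paper does not spell out a proof of this $\Theta$-statement at all, but simply states it as the combination of the earlier $\mc{O}(d\cdot(|V|+|E|))$ upper bound with the lower bound coming from the family $\weak[n]$ and the count $a_n \ge \tfrac12 n^2$ of recursive calls. Your computation that $d\cdot(|V|+|E|) = \Theta(n^2)$ for $\weak[n]$ and your observation that each call contributes at least constant work are precisely what is needed to turn those ingredients into the claimed tight bound.
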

Note that this complexity is a factor $d$ worse than that of the
dedicated algorithm. For practical problems such as when solving parity
games that come from model checking problems $d$ is relatively
small; we expect that for such cases, the difference between the
dedicated algorithm and Zielonka's algorithm to be small.

\subsection{Dull and Nested Solitaire Games}
\label{sec:dull_and_solitaire_zielonka}

We next prove that dull games and (nested) solitaire require exponential 
time to solve using
\ziel. Given that dull games can be converted to weak games in linear
time, and that Zielonka solves weak games in polynomial time, this may
be unexpected.  

Our focus is on solitaire games first. We construct a family of 
parity games $\solitaire[n] =
(V^n,E^n,\priosym^n, (V^n_\even,V^n_\odd))$ with vertices
$V^n = \{v_0, \ldots, v_{2n-1}, u_1, \ldots u_n\}$.
All vertices belong to player \even; that is, $V^n_\even = V^n$ and
$V^n_\odd = \emptyset$. The priorities and the edges are described by
Table~\ref{tab:grey_game}.
\begin{table}[h!]
\centering
\small
\caption{The family $\solitaire$ of games; $1 \le i < 2n, 1 \le j \le n$.}
\label{tab:grey_game}
\begin{tabular}{c|c|c}
Vertex & Priority & Successors \\
\hline
$v_i$ & $i+2$ & $\{v_{i-1} \}$\\
$v_0$ & $2$ & $\{v_0\}$ \\
$u_j$ & $1$ & $\{u_j, v_{2j-1} \}$\\
\end{tabular}

\end{table}
\begin{proposition} The game $\solitaire[n]$ is of size linear in $n$;
\ie $|\solitaire[n]| = \mc{O}(n)$, it has $3n$ vertices, $4n$ edges and
$2n+1$ different priorities. Moreover, the game $\solitaire[n]$ is a
(nested) solitaire game.

\end{proposition}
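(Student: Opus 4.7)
The plan is to verify each of the four claims by direct inspection of Table~\ref{tab:grey_game}, which completely specifies the game $\solitaire[n]$. Each claim reduces to a case analysis over the three families of vertices: $v_0$, the vertices $v_i$ with $1 \le i < 2n$, and the vertices $u_j$ with $1 \le j \le n$.

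For the vertex count, $V^n$ is explicitly defined as the disjoint union of $\{v_0, \ldots, v_{2n-1}\}$ and $\{u_1, \ldots, u_n\}$, giving $2n + n = 3n$ vertices, and hence linear size. For the edge count, I would partition the edges by source: one self-loop at $v_0$, the $2n-1$ edges $v_i \to v_{i-1}$ for $1 \le i < 2n$, and two edges per $u_j$ (namely $u_j \to u_j$ and $u_j \to v_{2j-1}$), for a total of $1 + (2n-1) + 2n = 4n$ edges. For the priority count, enumerating the priorities appearing in the table yields $\{1\}$ (from each $u_j$), $\{2\}$ (from $v_0$), and $\{3, 4, \ldots, 2n+1\}$ (from the $v_i$ with $i \ge 1$), together forming the set $\{1, 2, \ldots, 2n+1\}$ of cardinality $2n+1$. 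As a side observation, every vertex has at least one outgoing edge, so the edge relation is total and $\solitaire[n]$ is indeed a parity game.

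For the solitaire property, I would observe that $V^n_\odd = \emptyset$ by construction, so player $\odd$ owns no vertex and can therefore make no non-trivial moves; the game is thus trivially solitaire. The nested solitaire property follows immediately: any subgame induced by a strongly connected component inherits the property $V_\odd = \emptyset$, and is therefore solitaire as well.

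There is no real obstacle here, as the proposition is a bookkeeping exercise over the rows of Table~\ref{tab:grey_game}; the only mild care required is in the index arithmetic for the edge and priority counts. The more substantive properties of the family $\solitaire$---namely the exponential lower bound on the number of recursive calls made by \ziel---will be the subject of the subsequent results.
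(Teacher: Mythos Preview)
Your proposal is correct; each of the counts checks out against Table~\ref{tab:grey_game}, and the solitaire property is indeed immediate from $V^n_\odd = \emptyset$. The paper itself offers no proof for this proposition, treating it as self-evident from the table, so your write-up is simply a careful elaboration of what the authors left implicit.
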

%
%

\begin{figure}[b]
\centering
\begin{tikzpicture}[>=stealth']
\tikzstyle{every node}=[draw, inner sep=1pt, minimum size=8pt];
  \node[shape=diamond, label=above:$v_5$] (v1) { \scriptsize 7};
  \node[shape=diamond, label=above:$v_4$] (v2) [right of=v1,xshift=30pt] { \scriptsize 6};
  \node[shape=diamond, label=above:$v_3$] (v3) [right of=v2,xshift=30pt] { \scriptsize 5};
  \node[shape=diamond, label=above:$v_2$] (v4) [right of=v3,xshift=30pt] { \scriptsize 4};
  \node[shape=diamond, label=above:$v_1$] (v5) [right of=v4,xshift=30pt] { \scriptsize 3};
  \node[shape=diamond, label=above:$v_0$] (v6) [right of=v5,xshift=30pt] { \scriptsize 2};

  \node[shape=diamond, label=left:$u_3$] (u1) [below of=v1] { \scriptsize 1};
  \node[shape=diamond, label=left:$u_2$] (u2) [below of=v3] { \scriptsize 1};
  \node[shape=diamond, label=left:$u_1$] (u3) [below of=v5] { \scriptsize 1};

  \path[->]
  (v1) edge (v2)
  (v2) edge (v3)
  (v3) edge (v4)
  (v4) edge (v5)
  (v5) edge (v6)
  (v6) edge[loop right] (v6)
  (u1) edge (v1) edge [loop right] (u1)
  (u2) edge (v3) edge [loop right] (u2)
  (u3) edge (v5) edge [loop right] (u3)
   ;

\end{tikzpicture}
\caption{The game $\solitaire[3]$.}
\label{fig:solitaire_3}
\end{figure}

\noindent
The game $\solitaire[3]$ is depicted in Figure~\ref{fig:solitaire_3}.
Observe that in this game, vertex $v_5$ has the maximal priority and that
this priority is odd. This means that Zielonka's algorithm will compute
the odd-attractor to $v_5$ in line 8 of the algorithm, \ie
$\attr{\odd}{\{v_{5}\}} = \{v_{5}\}$. We can generalise this observation
for arbitrary game 
$\solitaire[n]$: in such a game,
$\attr{\odd}{\{v_{2n-1}\}} = \{v_{2n-1}\}$.  Henceforth, we denote the
subgame $\solitaire[n] \setminus \{v_{2n-1}\}$ by $\solitaire[n,-]$.

\begin{lemma}
\label{lem:solutions_solitaire}
The game $\solitaire[n]$ is won by 
player \even.
In the game $\solitaire[n,-]$, all vertices except for vertex $u_n$,
are won by player \even.
\end{lemma}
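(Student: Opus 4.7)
The plan is to exhibit explicit positional winning strategies for player \even, exploiting the fact that in $\solitaire[n]$ every vertex belongs to \even, so \even has complete control over all plays.

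First I would define a positional strategy $\sigma$ for \even on $\solitaire[n]$ as follows: $\sigma(v_i) = v_{i-1}$ for $1 \le i < 2n$, $\sigma(v_0) = v_0$, and $\sigma(u_j) = v_{2j-1}$ for $1 \le j \le n$. Since all vertices are owned by \even, any play from any starting vertex is completely determined by $\sigma$. By inspection of the edges, the induced play from any $v_i$ descends through $v_{i-1}, v_{i-2}, \ldots, v_0$ and then loops at $v_0$; the induced play from any $u_j$ goes $u_j \to v_{2j-1} \to v_{2j-2} \to \ldots \to v_0$ and loops. In either case, the unique priority occurring infinitely often is $\prio{v_0} = 2$, which is even, so $\sigma$ is winning for \even from every vertex. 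This establishes the first claim.

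For $\solitaire[n,-]$, the key observation is that removing $v_{2n-1}$ strips $u_n$ of its only non-loop successor, leaving $u_n \to u_n$ as its sole outgoing edge. Consequently, any play from $u_n$ is forced to stay at $u_n$ forever, making priority $1$ the dominating priority, so $u_n$ is won by \odd. For every other vertex, the restriction of $\sigma$ to $\solitaire[n,-]$ remains well-defined: from $v_i$ with $i < 2n-1$ the descending path $v_i \to v_{i-1} \to \ldots \to v_0$ never visits the removed vertex $v_{2n-1}$, and for $u_j$ with $j < n$ the target $v_{2j-1}$ satisfies $2j-1 < 2n-1$, so the strategy remains valid. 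The same argument as before then shows \even wins from all vertices of $\solitaire[n,-]$ except $u_n$.

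The proof contains no real obstacle: since $V^n_\odd = \emptyset$, there is nothing to verify on \odd's side, and the main subtlety is merely to notice that $u_n$ survives in $\solitaire[n,-]$ only through its self-loop, which is exactly what makes it a losing vertex for \even and sets up the exponential blow-up analysed in the subsequent results.
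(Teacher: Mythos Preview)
Your proof is correct and follows essentially the same approach as the paper: you exhibit the identical positional strategy $\sigma$ (sending each $v_i$ down to $v_{i-1}$, looping at $v_0$, and sending each $u_j$ to $v_{2j-1}$), and for $\solitaire[n,-]$ you make the same observation that $u_n$ is forced onto its self-loop of priority $1$ while $\sigma$ remains valid and winning elsewhere. Your write-up is in fact more explicit than the paper's, spelling out why every induced play eventually loops at $v_0$ and noting that $V^n_\odd = \emptyset$ leaves nothing to check for the opponent.
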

\begin{proof} The fact that $\solitaire[n]$ is won by player \even follows
immediately from the strategy $\sigma \oftype V^n \to V^n$, defined
as $\sigma(v_i) = v_{i-1}$ for all $1 \le i < 2n$, $\sigma(u_i) = v_{2i -1}$
for all $i \le n$ and $\sigma(v_0) = v_0$, which is winning for
player \even.  For the game $\solitaire[n,-]$, a strategy $\sigma'$ can
be used that is as strategy $\sigma$ for all vertices $v \not= u_n$;
for vertex $u_n$, we are forced to choose $\sigma'(u_n) = u_n$, since
$u_n$ is the sole successor of $u_n$ in $\solitaire[n,-]$. Since the priority
of $u_n$ is odd, the vertex is won by player $\odd$.
\end{proof}
We now proceed to the runtime of Zielonka's algorithm on the family $\solitaire$.

\myomit{Let $a_n$ be defined through the following recurrence relation:
\[
\begin{array}{ll}
a_0 & = 1\\
a_{n+1} & = 2*a_n \\
\end{array}
\]
Note that $a_n = 2^n$; that is, $a_n$ is exponential in $n$.}
\begin{proposition}
\label{prop:solitaire-calls}
Solving $\solitaire[n]$, for $n > 0$, requires at least $2^n$ calls to
\ziel.
\end{proposition}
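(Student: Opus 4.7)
The plan is an induction on $n$ aimed at the recurrence
\[
Z(\solitaire[n]) \;\ge\; 2 \cdot Z(\solitaire[n-1]),
\]
where $Z(G)$ counts the total number of \ziel-invocations rooted at a top-level call on $G$. Combined with the directly-checked base case $Z(\solitaire[1]) \ge 2$, iterating the recurrence delivers $Z(\solitaire[n]) \ge 2^n$.

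For the inductive step I first trace the top-level call of \ziel on $\solitaire[n]$. The maximum priority $2n+1$ is odd and is attained only at $v_{2n-1}$, whose only predecessor is $u_n$; since $u_n$ is \even-owned with a self-loop, its successors are not all contained in $\{v_{2n-1}\}$, so $A = \{v_{2n-1}\}$ and the first recursive call fires on $\solitaire[n,-]$. By Lemma~\ref{lem:solutions_solitaire}, this call returns $W'_\odd = \{u_n\}$, triggering the else-branch. The resulting \even-attractor $B$ sweeps up $v_{2n-1}$ (via its successor $v_{2n-2}$) and then $u_n$ (via $v_{2n-1}$), so $B = V$ and the second recursive call runs on the empty game. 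This gives $Z(\solitaire[n]) = Z(\solitaire[n,-]) + 2$, reducing the problem to lower-bounding $Z(\solitaire[n,-])$.

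The heart of the argument is showing $Z(\solitaire[n,-]) \ge 2\,Z(\solitaire[n-1])$. The top priority $2n$ in $\solitaire[n,-]$ is attained only on $v_{2n-2}$, which has no predecessors and so is peeled as a singleton \even-attractor; the first recursive call thus lands on $\solitaire[n,-] \setminus \{v_{2n-2}\}$, which is the disjoint union of $\solitaire[n-1]$ and a self-loop on $u_n$. This returns $W'_\odd = \{u_n\}$, and a singleton \odd-attractor triggers the second recursive call on $\solitaire[n,-] \setminus \{u_n\} = \solitaire[n] \setminus \{v_{2n-1}, u_n\}$; that game in turn peels $\{v_{2n-2}\}$ as a singleton \even-attractor and recurses on exactly $\solitaire[n-1]$, contributing $\ge 1 + Z(\solitaire[n-1])$ invocations. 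For the first recursive call I would establish an auxiliary monotonicity lemma asserting that adjoining a disjoint self-loop to a parity game cannot decrease the number of \ziel-invocations; applied to $\solitaire[n-1]$ with the adjoined $u_n$-loop, this contributes a further $\ge Z(\solitaire[n-1])$ invocations. Combining the two yields the desired doubling.

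The main obstacle I anticipate is the monotonicity lemma. Intuitively, \ziel on $G + u^*$ mirrors \ziel on $G$ because the disjoint stuck vertex never enters any attractor until it is explicitly targeted, but making this precise requires a simultaneous induction over the two recursion trees, carefully tracking how the presence of $u^*$ in $W'_{\pnot p}$ forces additional else-branches. Granted this lemma, the remaining content is routine attractor bookkeeping specific to the family $\solitaire$.
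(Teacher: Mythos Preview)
Your plan is essentially the paper's argument, carried out with more care. The paper likewise reduces to the two recursive calls arising inside $\solitaire[n,-]$ and observes that each lands on a game containing $\solitaire[n-1]$ as a disjoint component (namely $\solitaire[n,-]\setminus\{v_{2(n-1)}\}$ and $\solitaire[n,-]\setminus\{u_n\}$), which yields the doubling. It does \emph{not} isolate your monotonicity lemma; it simply asserts that since $\solitaire[n-1]$ sits inside each of these as a ``separate subgame'', the recursive call ``will therefore also solve the subgame $\solitaire[n-1]$''. So the obstacle you flag is real but is dispatched by fiat in the paper rather than by an explicit lemma; your proposal is in that sense more rigorous, not different in spirit.

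One small slip in your trace: at the outermost call on $\solitaire[n]$ the maximum priority $2n+1$ is odd, so $p=\odd$ and the else-branch is triggered by $W'_{\pnot p}=W'_{\even}\neq\emptyset$, not by $W'_{\odd}=\{u_n\}$. Your subsequent computation that $B=\attr{\even}{W'_\even}$ absorbs $v_{2n-1}$ and then $u_n$, hence $B=V^n$ and the second top-level recursion is on the empty game, is correct regardless.
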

\begin{proof}
Solving the game $\solitaire[1]$ requires at least one call to \ziel.

Consider the game $\solitaire[n]$, for $n > 1$.  Observe that \ziel is
invoked recursively on the game $\solitaire[n,-]$ in the first recursion
on line 9.  We focus on solving the latter game.

The vertex with the
highest priority in $\solitaire[n,-]$ is $v_{2(n-1)}$. Observe that $A =
\attr{\odd}{\{v_{2(n-1)}\}} = \{v_{2(n-1)}\}$. Note that
$\solitaire[n,-] \setminus A$ contains $\solitaire[n-1]$ as a separate subgame.
The first recursive call in solving $\solitaire[n,-]$ will therefore also solve
the subgame $\solitaire[n-1]$.

Next, observe that $u_{n}$ (and $u_{n}$ alone) is won by player
$\odd$, see Lemma~\ref{lem:solutions_solitaire}. We therefore need
to compute $B = \attr{\even}{\{u_n\}} = \{u_n\}$. Now, note that
$\solitaire[n,-] \setminus B$ subsumes the subgame $\solitaire[n-1]$,
which is a separate game in $\solitaire[n,-] \setminus B$.
Therefore, also the second recursive call to \ziel involves solving the subgame
$\solitaire[n-1]$.
\end{proof}
The lower bound on the number of iterations for \ziel is thus exponential 
in the number of vertices.
\begin{theorem}
Solving (nested) solitaire games using \ziel requires $\Omega(2^{|V|/3})$.
\end{theorem}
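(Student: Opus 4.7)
The plan is to assemble the theorem directly from the two facts already established about the family $\solitaire[n]$: the structural fact that $\solitaire[n]$ contains exactly $3n$ vertices (stated in the proposition preceding Lemma~\ref{lem:solutions_solitaire}) and the combinatorial fact from Proposition~\ref{prop:solitaire-calls} that \ziel issues at least $2^n$ recursive calls when solving $\solitaire[n]$. Since each invocation of \ziel must take at least constant time, the running time on $\solitaire[n]$ is bounded below by a constant multiple of $2^n$.

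Concretely, I would fix the family $\{\solitaire[n]\}_{n \geq 1}$ as a witness. Writing $N = |V^n| = 3n$ for the vertex count of $\solitaire[n]$, substitution gives a lower bound of $2^n = 2^{N/3}$ calls to \ziel, and hence a runtime of at least $c \cdot 2^{N/3}$ for some constant $c > 0$. Because this bound holds for an infinite family parameterized by $N$ (taking values $3, 6, 9, \ldots$), this matches the definition of $\Omega(2^{|V|/3})$. Finally, since every $\solitaire[n]$ is a solitaire game (and, trivially, a nested solitaire game), the lower bound transfers to both classes.

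There is essentially no obstacle here, as both ingredients have already been proven in isolation; the only thing to check is that the asymptotic notation is applied correctly and that one records that the witnesses are genuine (nested) solitaire games. The proof can therefore be rendered in just a few lines.

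\begin{proof}
Consider the family $\{\solitaire[n]\}_{n \geq 1}$. By the proposition preceding Lemma~\ref{lem:solutions_solitaire}, each $\solitaire[n]$ is a (nested) solitaire game with $|V^n| = 3n$ vertices. By Proposition~\ref{prop:solitaire-calls}, running \ziel on $\solitaire[n]$ requires at least $2^n$ recursive calls, each of which takes at least constant time. Writing $N = 3n$, the running time of \ziel on $\solitaire[n]$ is therefore bounded below by a constant multiple of $2^{N/3}$. Hence, solving (nested) solitaire games with \ziel requires $\Omega(2^{|V|/3})$ time.
\end{proof}
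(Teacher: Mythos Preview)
Your proposal is correct and mirrors exactly the (implicit) argument in the paper: the theorem is stated there without a separate proof, as it follows immediately from Proposition~\ref{prop:solitaire-calls} (at least $2^n$ calls) together with the earlier proposition that $\solitaire[n]$ is a (nested) solitaire game with $3n$ vertices. Your write-up simply makes this straightforward combination explicit.
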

We note that this improves on the bounds of $\Omega(1.6^{|V|/5})$ established by Friedmann. Being
structurally more complex, however, his games are robust to typical
(currently known) improvements to Zielonka's algorithm such as the
one presented in the next section (although this is not mentioned or
proved in~\cite{Fri:11}). Still, we feel that the
simplicity of our family $\solitaire$ fosters a better understanding
of the algorithm.

Observe that the family $\solitaire$ is also a family of dull games. As
a result, we immediately have the following theorem.
\begin{theorem}
Solving dull games using \ziel requires $\Omega(2^{|V|/3})$.
\end{theorem}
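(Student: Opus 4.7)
The plan is to reuse the family $\solitaire$ already constructed for the solitaire lower bound, since each $\solitaire[n]$ happens to also be dull; no new construction is needed. All the combinatorial work has been done in Proposition~\ref{prop:solitaire-calls}, so the real content of the proof is a one-line structural check, after which the bound transfers verbatim.

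First I would revisit Table~\ref{tab:grey_game} and enumerate the basic cycles in $\solitaire[n]$. The $v$-chain $v_{2n-1} \to v_{2n-2} \to \cdots \to v_0$ is acyclic, and each $u_j$ has only $u_j$ and $v_{2j-1}$ as successors, so the only basic cycles are the self-loop at $v_0$, whose dominating priority is $2$ (even), and the self-loops at each $u_j$, each with dominating priority $1$ (odd). These basic cycles are pairwise vertex-disjoint; in particular no vertex lies on both an even and an odd basic cycle, so by definition $\solitaire[n]$ is a dull game.

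With dullness in hand, I would simply invoke Proposition~\ref{prop:solitaire-calls}: solving $\solitaire[n]$ with \ziel requires at least $2^n$ recursive calls, and $|V^n| = 3n$, so the running time is $\Omega(2^{|V|/3})$. Since every $\solitaire[n]$ is dull, this immediately yields the claimed lower bound on dull games.

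The only potential obstacle is making the dullness verification convincing; but because the cycle structure of $\solitaire[n]$ is nothing more than a handful of disjoint self-loops hanging off an acyclic backbone, the check is essentially immediate. No recurrence, attractor analysis, or winning-strategy argument beyond what was already used for the solitaire case is required.
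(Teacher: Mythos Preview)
Your proposal is correct and matches the paper's own argument: the paper simply observes that the family $\solitaire$ consists of dull games and then immediately derives the theorem from the solitaire lower bound already established. Your explicit enumeration of the basic cycles (the self-loop at $v_0$ and the self-loops at the $u_j$) is a welcome elaboration of what the paper leaves as a one-line observation, but the overall route is identical.
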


\section{Recursively Solving Special Games in Polynomial Time}
\label{sec:polynomial_scc}

The $\solitaire$ family of games of the previous section are easily
solved when preprocessing the games using priority propagation and
self-loop elimination. However, it is straightforward to make the
family robust to such heuristics by duplicating the vertices that have
odd priority, effectively creating odd loops that are not detected by
such preprocessing steps.  In a similar vein, the commonly suggested
optimisation to use a strongly connected component decomposition as a
preprocessing step can be shown to be insufficient to solve (nested)
solitaire games. The family $\solitaire$ can easily be made robust to
this preprocessing step: by adding edges from $v_0$ to all $u_i$, each
game in $\solitaire$ becomes a single SCC.

In this section, we investigate the complexity of a tight integration of
a strongly connected component decomposition and Zielonka's algorithm,
as suggested by \eg~\cite{Jur:00,FL:09}.  By decomposing the game each time
Zielonka is invoked, large SCCs are broken down in smaller SCCs,
potentially increasing the effectiveness of the optimisation. The
resulting algorithm is listed as Algorithm~\ref{alg:optimised_zielonka}.
\begin{algorithm}[ht]
\caption{Optimised Zielonka's Algorithm}
\label{alg:optimised_zielonka}
\begin{algorithmic}[1]
\Function{Zielonka\_SCC}{G}
\State $(W^G_\even, W^G_\odd) \gets (\emptyset, \emptyset)$
\If{$V \not= \emptyset$}
     \State $\mc{S} := \Call{SCC\_Graph\_Decomposition}{G}$
     \For{\textbf{each} final SCC $C \in \mc{S}$}
       \State $H \gets G \cap C$
       \State $m \gets \max\{ \prio{v} ~|~ v \in C\}$
       \State \textbf{if} $m\!\!\!\mod 2 = 0$ \textbf{then} $p \gets \even$
              \textbf{else} $p \gets \odd$ \textbf{end if} 
       \State $U \gets \{v \in C ~|~ \prio{v} = m\}$
       \State $A \gets \attr[H]{p}{U}$
       \State $(W'_\even, W'_\odd) \gets \Call{Zielonka\_SCC}{H\setminus A}$
       \If{$W'_{\pnot{p}} = \emptyset$}
             \State $(W_{p}, W_{\pnot{p}}) \gets (A \cup W'_{p}, \emptyset)$
       \Else
             \State $B \gets \attr[H]{\pnot{p}}{W'_{\pnot{p}}}$
             \State $(W'_\even, W'_\odd) \gets \Call{Zielonka\_SCC}{H\setminus B}$
             \State $(W_{p},W_{\pnot{p}}) \gets (W'_{p},W'_{\pnot{p}} \cup B)$
       \EndIf
     \State $(W^G_\even,W^G_\odd) \gets (W^G_\even \cup \attr[G]{\even}{W_\even} , W^G_\odd \cup \attr[G]{\odd}{W_\odd})$
     \State $\mc{S} := \Call{SCC\_Graph\_Decomposition}{G\setminus(W_\even^G \cup W_\odd^G)}$
    \EndFor
\EndIf
\State \Return $(W^G_\even, W^G_\odd)$
\EndFunction
\end{algorithmic}
\end{algorithm}

\noindent

We will need the following lemma:

\begin{lemma}
\label{lem:nosecond}
If algorithm \ref{alg:optimised_zielonka} is invoked on a game that is either dull or (nested) solitaire, then in the entire recursion tree all second recursive calls (line 16) are trivial (with empty set as an argument).
\end{lemma}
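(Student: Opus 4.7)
The plan is to prove the statement by strong induction on $|V|$, noting upfront that both the dull property and the nested-solitaire property are preserved by the operations performed by the algorithm: a basic cycle of a subgame is a basic cycle of the supergame, so disjointness of even and odd basic cycles is inherited; and any vertex with multiple outgoing edges in an SCC of a subgame had multiple outgoing edges in the enclosing SCC of the original game, so the single-choice-maker condition is inherited as well. Consequently the induction hypothesis will apply at every recursive invocation, and it suffices to show, at a single final SCC $H = G \cap C$ processed by the outer for-loop, that either the test on line~12 succeeds (so that line~16 is not reached) or $B = H$ (so that line~16 is invoked on the empty game).

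For dull games, I would first observe that a strongly connected dull game has all its basic cycles of a common parity: the vertex of maximum priority $m$ lies on a basic cycle whose dominating priority is exactly $m$, and by disjointness no other basic cycle can have the opposite parity. Every infinite play in $H$, and hence in any subgame $H \setminus A$, is therefore won by $p$, so the first recursive call returns $W'_{\pnot{p}} = \emptyset$ and line~16 is simply never reached.

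For nested-solitaire games, I would let $\i$ denote the unique player making non-trivial choices in the SCC $H$. If $\i = p$, a backward induction along any path to $U$ in the strongly connected $H$ yields $\attr[H]{p}{U} = H$: at each $p$-vertex $p$ selects the path's successor, while at each $\pnot{p}$-vertex the unique outgoing edge must coincide with the path's successor. Hence $H \setminus A = \emptyset$ and line~16 is not reached. If instead $\i = \pnot{p}$ and $H$ contains no $\pnot{p}$-parity basic cycle, then as in the dull case every infinite play in $H$ is won by $p$, so again $W'_{\pnot{p}} = \emptyset$.

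The main obstacle is the remaining subcase: $\i = \pnot{p}$ with a $\pnot{p}$-parity basic cycle $\gamma$ present in $H$. The dominating priority of $\gamma$ is strictly below $m$, so $\gamma$ avoids $U$; exploiting that $p$-vertices have a unique successor in this subcase, a strictly-decreasing argument on the attractor ranks $\attr[n]{p}{U}$ will rule out any vertex of $\gamma$ from lying in $A$, so $\gamma \subseteq H \setminus A$. The component of $H \setminus A$ containing $\gamma$ is then won by $\pnot{p}$ and $W'_{\pnot{p}} \neq \emptyset$. To conclude $B = H$, I would again use strong connectivity of $H$: from any $v \in H$ pick a path to $W'_{\pnot{p}}$ in $H$, and observe that since every $p$-vertex on this path has a unique (hence forced) successor, player $\pnot{p}$ can follow the path unopposed; hence $\pnot{p}$ forces a reach of $W'_{\pnot{p}}$ from $v$, giving $v \in B$ as required.
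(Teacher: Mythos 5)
Your argument is correct and follows essentially the same route as the paper's proof: for a final SCC $H$, in the dull case every subgame of $H$ is won by the player $p$ of the top parity so the test in line 12 always succeeds, and in the (nested) solitaire case strong connectivity forces either $\attr[H]{p}{U} = H$ (choice player $=p$) or $B = \attr[H]{\pnot{p}}{W'_{\pnot{p}}} = H$ (choice player $=\pnot{p}$), which is exactly the paper's case split; your explicit induction over the recursion tree and the closure of both classes under the algorithm's operations merely spell out what the paper leaves implicit. One small point to tighten: the uniform parity of basic cycles in a strongly connected dull game does not follow from disjointness alone --- you also need strong connectivity to join two hypothetical opposite-parity cycles into a basic cycle overlapping both, contradicting dullness (the observation the paper records in Section 2).
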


\begin{proof}

 In case of dull games, since game $H$ is a connected component, each of its subgames is won by player corresponding to $m \!\!\!\mod 2$, namely $p$. Hence after the line 11 is executed, we obtain $W'_{p} = H \setminus A$ and $W'_{\pnot{p}} = \emptyset$. The second recursive call will therefore never be invoked.
 
 Now assume that the game is solitaire and owned by player $q$. If $p=q$, then
 $\attr[H]{p}{U} = H$ (the game is $p$-owned, and strongly connected),
 and the second call is not invoked at all. Otherwise, the second
 call is invoked only if $W'_{\pnot{p}} \neq \emptyset$. But then $B =
 \attr[H]{\pnot{p}}{W'_{\pnot{p}}} = \attr[H]{q}{W'_{\pnot{p}}} = H$
 (the game is owned by $\pnot{p}$, and strongly connected), and $H
 \setminus B = \emptyset$. 
\end{proof}

We will now prove that the optimisation suffices to solve special parity games in polynomial time.
\begin{theorem}
Algorithm \ref{alg:optimised_zielonka} solves dull and (nested) solitaire games in ${\cal O}(|V| \cdot (|V| + |E|))$ time.
\end{theorem}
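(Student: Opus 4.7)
The plan is to combine Lemma~\ref{lem:nosecond} --- which forces every second recursive call to be trivial on dull and (nested) solitaire inputs --- with a linear bound on the number of invocations of Algorithm~\ref{alg:optimised_zielonka} and a linear bound on the work done per invocation.

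First, I would bound the total number of calls. Let $T(G)$ denote this number when the procedure is applied to $G$, with base case $T(\emptyset) = 1$. For $|V(G)| > 0$, each iteration of the for-loop in lines~5--21 processes one final SCC $C$; the set $\mc{S}$ is then refreshed in line~20, so the iterations process pairwise disjoint SCCs $C_1,\ldots,C_k$ with $\sum_i |C_i| \le |V(G)|$. For each $C_i$, the attractor $A_i$ computed on line~10 contains $U_i \neq \emptyset$, so the first recursive call is issued on a subgame with at most $|C_i| - 1$ vertices. By Lemma~\ref{lem:nosecond}, the second recursive call (when taken) has empty input and thus contributes exactly $T(\emptyset) = 1$. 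A straightforward induction on $|V(G)|$ then yields $T(G) \le 2|V(G)| + 1$, hence $T(G) = \mc{O}(|V|)$.

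Next, I would bound the work performed by a single invocation, excluding the recursive subcalls. The SCC decompositions in lines~4 and~20 can be done in $\mc{O}(|V(G)|+|E(G)|)$ using Tarjan's algorithm; the construction of each subgame $H$, as well as the attractor computations on lines~10, 15 and~19, are all linear in the input size. Bounding the size of every call's input crudely by $|V|$ and $|E|$ and multiplying by the number of calls yields a total running time of $\mc{O}(|V|) \cdot \mc{O}(|V|+|E|) = \mc{O}(|V| \cdot (|V|+|E|))$, as required.

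The main point requiring care is the exact semantics of the for-loop, since the set $\mc{S}$ over which the loop iterates is overwritten within the loop body in line~20. I would make explicit that the correct reading is a while-loop which picks one final SCC at a time, processes it, and then re-decomposes the remaining game, so that the loop exits precisely when the unprocessed part of $G$ is empty. A minor accompanying remark is that Lemma~\ref{lem:nosecond} already covers the entire recursion tree, so no separate preservation lemma for dull or (nested) solitaire games under taking subgames is needed for the induction to go through.
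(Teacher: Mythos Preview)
Your proposal is correct and follows essentially the same route as the paper: the paper counts the total number of for-loop iterations over the whole recursion tree and proves $\#\textsf{for}(V)\le|V|$ by the same induction you outline, whereas you count recursive calls and obtain $T(G)\le 2|V|+1$; the two quantities differ by at most a constant factor since each for-loop iteration spawns exactly one non-trivial recursive call.

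One small accounting wrinkle is worth tightening. In your second paragraph you effectively claim that the non-recursive work of a single invocation is $\mc{O}(|V|+|E|)$, but an invocation whose for-loop runs $k$ times performs $k$ SCC decompositions (line~20) and $k$ rounds of attractor computations, so its own work is $\mc{O}(k\cdot(|V|+|E|))$. This does not break the final bound---summing $k$ over all invocations is exactly the total number of for-loop iterations, which is at most $T(G)=\mc{O}(|V|)$---but the step ``multiply work per call by number of calls'' as written does not quite go through. The paper avoids the issue by costing and counting for-loop iterations directly rather than calls.
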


\begin{proof}

\def\nfor{\#\textsf{for}}


Let $\nfor(V)$ denote the total number of iterations of the \textsf{for} loop in the entire recursion tree. Observe that the total execution time of $\textsc{Zielonka\_SCC}$ can be bounded from above as follows:
\[
T(V,E) = {\cal O}(\nfor(V) \cdot (|V|+|E|))
\]
Indeed, every iteration of the loop (not counting the iterations in subroutines) contributes a maximal factor of ${\cal O}(|V|+|E|)$ running time, which results from the attractor computation and SCC decomposition.

We will use subscripts for the values of the algorithm variables in iteration $i \in \{1, \dots, k\}$, e.g. the value of variable $C$ in iteration $i$ is $C_i$. Furthermore, by $V_i$ we will denote the set of vertices in the subgame considered in the first recursive call, i.e. $V_i = C_i \setminus A_i$.

We will show that $\nfor(V) \leq |V|$. We have:
\[
\tag{*}
\begin{array}{llll}
\nfor(V) & \leq & 1 &\text{for } |V| \leq 1\\
\nfor(V) & \leq & \nfor(V_1) + \dots + \nfor(V_k) + k & \text{for }|V| > 1
\end{array}
\]

In the second inequality, $k$ is the total number of bottom SCCs considered in line 5. Each of these SCCs
may give rise to a recursive call (at most one, see Lemma \ref{lem:nosecond}). This recursive call contributes in turn $\nfor(V_i)$ iterations.

We proceed to show $\nfor(V) \leq |V|$ by induction on $|V|$. The base holds immediately from the first inequality. Now assume that $\nfor(V) \leq |V|$ for $|V| < m$.

Obviously $|C_1| + \dots + |C_k| \leq |V|$. Observe that in every iteration $i$ the set $A_i$ is nonempty, therefore $V_i < C_i$. Therefore $|V_1| + \dots + |V_k| \leq |V| - k$, or equivalently 
$|V_1| + \dots + |V_k| + k \leq |V|$. 

Applying the induction hypothesis in the right-hand side of (*) yields $\nfor(V) \leq |V_1| + \dots + |V_k| + k$, and due to the above observation we finally obtain $\nfor(V) \leq |V|$.
\end{proof}

The above upper bound is slower by a factor $V$ compared to the dedicated algorithms for solving weak and dull games. For nested solitaire games, the optimised recursive algorithm has an above upper bound comparable to that of standard dedicated algorithms for nested solitaire games when the number of different priorities is of $\mc{O}(V)$, and it is a factor $V/\log(d)$ slower compared to the most efficient algorithm for solving nested solitaire games. 

\section{A Tighter Exponential Bound for Zielonka's Optimised Algorithm}
\label{sec:tightness}

In view of the findings of the previous section, it seems beneficial
to always integrate Zielonka's recursive algorithm with SCC decomposition. Observe that the family of games
we used to establish the lower bound of
$\Omega(2^{|V|/3})$ in
Section~\ref{sec:special_games} does not permit us to prove the same
lower bound for the optimised
algorithm. As a result, the current best known lower bound for the algorithm is
still $\Omega(1.6^{|V|/5})$.  In this section, we show that the complexity of the
optimised algorithm is actually also $\Omega(2^{|V|/3})$.
The family of games we construct
is, like Friedmann's family, resilient to all optimisations we are aware of.

Let $\whitegame[n] = (V^n,E^n, \priosym^n, (V^n_\even,V^n_\odd))$,
for $n \ge 1$ be a family of parity games with set of vertices
$V^n = \{v_i,u_i,w_i ~|~ 1 \le i \le n \}$. The sets $V^n_\even$ and
$V^n_\odd$, the priority function $\priosym^n$ and the set of edges
are described by Table~\ref{tab:whitegame_family}.
We depict the game $\whitegame[4]$ in Figure~\ref{fig:whitegame}.

\begin{table}[h]
\centering
\small
\caption{The family $\whitegame$ of games; $1 \le i \le n$.}
\label{tab:whitegame_family}
\begin{tabular}{c|c|c|c}
Vertex & Player &  Priority & Successors \\
\hline
$v_i$ & $\odd$\, iff\, $i\!\!\!\mod 2 = 0$ & $i+1$ & $\{u_i\} \cup \{v_{i+1} ~|~ i < n \}$ \\
$u_i$ & $\odd$\, iff\, $i\!\!\!\mod 2 = 0$ & $i\!\!\!\mod 2$ & $\{w_i\} \cup \{v_{i+1} ~|~ i < n \}$ \\
$w_i$ & $\even$\, iff\, $i\!\!\!\mod 2 = 0$ & $i\!\!\!\mod 2$ & $\{u_i\} \cup \{w_{i-1} ~|~ 1 < i \}$ \\
\end{tabular}
\end{table}

\begin{figure}[h]
\centering
\begin{tikzpicture}[>=stealth']
\tikzstyle{every node}=[draw, inner sep=1pt, minimum size=8pt];
  \node[shape=rectangle, label=above:$v_4$] (v4) {\scriptsize 5};
  \node[shape=diamond, label=above:$v_3$]   (v3) [right of=v4,xshift=30pt] {\scriptsize 4};
  \node[shape=rectangle, label=above:$v_2$] (v2) [right of=v3,xshift=30pt] {\scriptsize 3};
  \node[shape=diamond, label=above:$v_1$]   (v1) [right of=v2,xshift=30pt] {\scriptsize 2};

  \node[shape=rectangle, label=right:$u_4$] (u4) [below of=v4]{\scriptsize 0};
  \node[shape=diamond, label=right:$u_3$]   (u3) [right of=u4,xshift=30pt] {\scriptsize 1};
  \node[shape=rectangle, label=right:$u_2$] (u2) [right of=u3,xshift=30pt] {\scriptsize 0};
  \node[shape=diamond, label=right:$u_1$]   (u1) [right of=u2,xshift=30pt] {\scriptsize 1};

  \node[shape=diamond, label=below:$w_4$] (w4) [below of=u4]{\scriptsize 0};
  \node[shape=rectangle, label=below:$w_3$]   (w3) [right of=w4,xshift=30pt] {\scriptsize 1};
  \node[shape=diamond, label=below:$w_2$] (w2) [right of=w3,xshift=30pt] {\scriptsize 0};
  \node[shape=rectangle, label=below:$w_1$]   (w1) [right of=w2,xshift=30pt] {\scriptsize 1};

\draw[->]
(v4) edge (u4)
(v3) edge (v4) edge (u3)
(v2) edge (v3) edge (u2)
(v1) edge (v2) edge (u1)

(u4) edge[bend left] (w4)
(u3) edge[bend left] (w3) edge (v4)
(u2) edge[bend left] (w2) edge (v3)
(u1) edge[bend left] (w1) edge (v2)

(w4) edge[bend left] (u4) edge (w3)
(w3) edge[bend left] (u3) edge (w2)
(w2) edge[bend left] (u2) edge (w1)
(w1) edge[bend left] (u1)

;

\end{tikzpicture}
\caption{The game $\whitegame[4]$.}
\label{fig:whitegame}
\end{figure}

\begin{proposition}
 The game $\whitegame[n]$ is won entirely by player $\even$ for even $n$ and entirely by player $\odd$ for odd $n$.
\end{proposition}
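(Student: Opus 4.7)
My plan is to invoke positional determinacy of parity games and exhibit, for each parity of $n$, an explicit positional winning strategy for the claimed winner. Since the two cases are mirror images, I would spell out the strategy for $\even$ when $n$ is even and handle the odd case by the symmetric construction with the players' roles swapped.

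For even $n$, I would propose the positional $\even$-strategy $\sigma$ that always climbs the $v$-ladder and cycles inside every even column: $\sigma(v_i) = v_{i+1}$ and $\sigma(u_i) = v_{i+1}$ for each odd $i$ (which is precisely when $v_i$ and $u_i$ are $\even$-owned, and automatically satisfy $i \le n-1$), together with $\sigma(w_i) = u_i$ for each even $i \le n$. Intuitively, whenever $\odd$ tries to keep the play inside an even column by alternating $u_i \leftrightarrow w_i$ he is trapped in a cycle of priority $0$; whenever he escapes via $u_i \to v_{i+1}$, $\sigma$ forwards the play onwards, and eventually it reaches column $n$ where $v_n \to u_n \to w_n$ is forced and the winning cycle $u_n \leftrightarrow w_n$ of priority $0$ closes.

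The core of the proof is to analyse the one-player residual graph $G_\sigma$ obtained by applying $\sigma$ at every $\even$-owned vertex, and to show that every cycle of $G_\sigma$ has maximum priority $0$. The key reachability lemma is that no vertex $u_i$ or $w_i$ with $i$ odd lies on any cycle of $G_\sigma$: the candidate incoming edges to $w_i$ with $i$ odd are $u_i \to w_i$ (removed because $\sigma(u_i) = v_{i+1}$) and $w_{i+1} \to w_i$ (removed because $\sigma(w_{i+1}) = u_{i+1}$, as $i+1$ is even), so $w_i$ has no incoming residual edge; the candidate incoming edges to $u_i$ with $i$ odd are $v_i \to u_i$ (removed by $\sigma$) and $w_i \to u_i$ (which originates in a vertex with no incoming edge and thus cannot sit on a cycle). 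Since the surviving outgoing edges of each $v_j$ point either upward in the $v$-ladder or drop within the same column to $u_j$, and no downward return path exists (any such path would have to traverse an odd-column $u$ or $w$), no $v_j$ lies on a cycle of $G_\sigma$ either. What remains are the local loops $u_i \leftrightarrow w_i$ for even $i$, all of priority $0$.

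The main obstacle I anticipate is making this cycle analysis airtight, because $\odd$ keeps several residual choices, most notably the option to escape from $u_i$ to $v_{i+1}$ for every even $i < n$, so one must rule out exotic cycles that weave through the $v$-ladder and sneak back down. Once the cycle lemma is in place the conclusion is automatic: every infinite play consistent with $\sigma$ stabilises inside a bottom SCC of $G_\sigma$ that lies in some $\{u_i, w_i\}$ with $i$ even, so its highest infinitely-recurring priority is $0$ and $\even$ wins from every starting vertex, making $\whitegame[n]$ an $\even$-paradise. The case of odd $n$ follows by the mirror construction: $\odd$ climbs the ladder through the even-indexed vertices and cycles in $u_i \leftrightarrow w_i$ for odd $i$, the analogous reachability argument shows the residual cycles all carry priority $1$, and $\odd$ wins the entire game.
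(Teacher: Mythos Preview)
Your proposal is correct and follows essentially the same approach as the paper's (omitted) proof sketch: exhibit an explicit positional strategy that climbs the $v$-ladder and forces the play into a $\{u_i,w_i\}$ loop of the appropriate parity, then argue duality for the other parity of $n$. Your analysis of the residual graph $G_\sigma$ is in fact considerably more careful than the paper's one-line justification, and your observation that every odd-indexed $w_i$ has no incoming $\sigma$-edge (hence no odd-indexed $u_i$ or $w_i$, and consequently no $v_j$, can lie on a residual cycle) cleanly pins down why the only surviving cycles are the priority-$0$ loops $u_i \leftrightarrow w_i$ for even $i$.
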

\myomit{
\begin{proof}

Consider $\whitegame[2n]$. The strategy for player $\even$ defined as $\sigma(v_{2n}) = v_{2n+1}$, $\sigma(u_n) = u_{2n+1}$, is winning. Indeed, every play will ultimately end up in one of the cycles $\{u_{2n},w_{2n}\}$ of nodes with priority $0$ (player $\odd$ will not be able to escape from the leftmost even cycle). The proof for $\whitegame[2n+1]$ and player $\odd$ is dual.
\end{proof}
}
\begin{theorem}
\label{thm:white-time}
 Solving $\whitegame[n]$ using either \ziel or \textsc{Zielonka\_SCC} requires $\Omega(2^{|V|/3})$ time.
\end{theorem}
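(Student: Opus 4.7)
The plan is to establish a recurrence $T(n) \ge 2T(n-1) + c$ for the number of recursive invocations that either variant of the algorithm performs on $\whitegame[n]$; since $|V^n| = 3n$, unrolling this yields $T(n) = \Omega(2^n) = \Omega(2^{|V|/3})$. The structural fact driving the recurrence is that the subgame of $\whitegame[n]$ induced on $\{v_1,\ldots,v_{n-1}, u_1,\ldots,u_{n-1}, w_1,\ldots,w_{n-1}\}$ is exactly $\whitegame[n-1]$, while the remaining pair $\{u_n,w_n\}$ forms, once $v_n$ is deleted, a $2$-cycle of priority $n\bmod 2$ attached to $\whitegame[n-1]$ only through the edge $w_n\to w_{n-1}$.

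For plain \ziel, the unique maximum-priority vertex is $v_n$ and, with $p$ denoting the parity of $n+1$, I would first verify that $A = \attr{p}{\{v_n\}} = \{v_n\}$: the only predecessors $v_{n-1}$ and $u_{n-1}$ are both $\pnot p$-owned and each retains a non-attracted successor ($u_{n-1}$ and $w_{n-1}$, respectively). The first recursive call therefore operates on a game whose main part is $\whitegame[n-1]$, and \ziel processes it essentially as it would $\whitegame[n-1]$ alone (the gadget has strictly lower priority and plays no role until the very bottom of the recursion), contributing at least $T(n-1)$ iterations. By induction $\whitegame[n-1]$ is won entirely by $p$ while the gadget $\{u_n,w_n\}$ is won by $\pnot p$, so $W'_{\pnot p}\supseteq\{u_n,w_n\}\neq\emptyset$ and the second recursive call fires. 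Computing $B=\attr{\pnot p}{W'_{\pnot p}}$ pulls $v_n$ into $B$ (its unique successor $u_n$ already lies in the set), so $B=\{v_n,u_n,w_n\}$ and the second call operates on $G\setminus B = \whitegame[n-1]$, contributing another $T(n-1)$ iterations.

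For \textsc{Zielonka\_SCC}, I additionally verify that SCC decomposition does not short-circuit the recursion. $\whitegame[n]$ decomposes into the singleton source SCC $\{v_1\}$ and a unique bottom SCC $V\setminus\{v_1\}$ (since $v_1$ has no incoming edges); the main for-loop begins on this bottom SCC with the same attractor $\{v_n\}$. After removing $v_n$, the residual decomposition yields a source SCC $\{u_n,w_n\}$ and a bottom SCC on $V\setminus\{v_1,v_n,u_n,w_n\}$, which is exactly the bottom SCC of $\whitegame[n-1]$; by induction, the first recursive call requires at least $T(n-1)$ iterations. The subsequent attractor $B$ again absorbs $v_n$ but leaves the $\whitegame[n-1]$-core intact in the bottom SCC, so the second recursive call proceeds analogously and contributes another $T(n-1)$ iterations. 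The top SCCs $\{u_n,w_n\}$ and $\{v_1\}$ are disposed of cheaply by the outer for-loop and do not affect the recurrence.

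The principal technical obstacle is the careful bookkeeping of winning regions after each recursive call: one must verify that the removal of $v_n$ really flips the winner of the $\whitegame[n-1]$-core to $p$ (while keeping the gadget with $\pnot p$), ensuring $W'_{\pnot p}\neq\emptyset$ so that the second recursive call is indeed invoked, and that the attractor $B$ does not swallow the $\whitegame[n-1]$-core. Once both recursive calls are shown to trigger full processing of a $\whitegame[n-1]$-like subgame at every level, the recurrence $T(n)\ge 2T(n-1)+c$ unrolls to $T(n) = \Omega(2^{|V|/3})$ for both \ziel and \textsc{Zielonka\_SCC}.
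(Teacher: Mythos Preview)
Your overall plan---setting up a recurrence $T(n)\ge 2T(n-1)+c$ by showing that both recursive calls face a copy of $\whitegame[n-1]$---matches the paper's (very terse) argument, and your attractor computation for $A=\attr{p}{\{v_n\}}=\{v_n\}$ is correct. However, the computation of $B$ in step~(4) is wrong, and this breaks the recurrence.

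You correctly note that the predecessors $v_{n-1}$ and $u_{n-1}$ of $v_n$ are $\pnot p$-owned (this is why they are \emph{not} pulled into the $p$-attractor $A$). But precisely because they are $\pnot p$-owned and each has the edge to $v_n$, they \emph{are} pulled into the $\pnot p$-attractor $B=\attr{\pnot p}{\{u_n,w_n\}}$ once $v_n$ has been added. Concretely one gets $B\supseteq\{u_n,w_n,v_n,v_{n-1},u_{n-1}\}$ (and for $n=2$ in fact $B=V$), so $G\setminus B$ is \emph{not} $\whitegame[n-1]$: it is $\whitegame[n-2]$ together with a dangling source $w_{n-1}$. Your inequality $T(n)\ge 2T(n-1)$ therefore does not follow from the argument as written; naively one only gets a Fibonacci-type recurrence from this decomposition.

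To repair the argument you cannot treat the two top-level calls symmetrically. The paper's proof sketch (``similar to Proposition~\ref{prop:solitaire-calls}'') points at analysing the game $\whitegame[n]\setminus\{v_n\}$ one level down---both recursive calls arising \emph{there} contain (after peeling off $v_{n-1}$) a subgame that behaves like $\whitegame[n-1]\setminus\{v_{n-1}\}$, and it is this nested structure that yields the doubling. You also need to track the accumulating tail of $\{u_i,w_i\}$ gadgets through the first-call chain rather than asserting they ``play no role''; they determine which player's winning region is nonempty at each level and hence whether the second call fires. Finally, your SCC observation that $v_1$ is a singleton source (so $\whitegame[n]$ is not literally strongly connected) is correct and slightly sharper than the paper's statement, but this does not affect the bound since the sole bottom SCC still carries the full recursion.
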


\begin{proof}
 The proof is similar to Prop. \ref{prop:solitaire-calls}; we can show that the game $\whitegame[n]$ requires $2^n$ calls to either \ziel or \textsc{Zielonka\_SCC}. The only significant difference in case of \textsc{Zielonka\_SCC} is that the game may be potentially simplified in line 4 of Alg.\ref{alg:optimised_zielonka}. However, each game $\whitegame[n]$ constitutes a strongly connected subgame, and therefore will not be decomposed.
\end{proof}

We compared the performance of the PGSolver tool, a publicly available
tool that contains an implementation of the optimised recursive algorithm,
on the family $\whitegame$ to that of Friedmann's family of games 
(denoted with $\mathcal{F}$), see Figure~\ref{fig:compared}. The figure
plots the number of vertices (horizontal axis) and the time required
to solve the games (vertical log scale axis), clearly illustrating that
$\whitegame$ games are harder.

\begin{figure}[h]
\centering
\begin{tikzpicture}[scale=1.0]
\begin{axis}[ymin=1, ymax=22000, ymode = log, xlabel=Number of vertices, ylabel=runtime (seconds),
legend style={cells={anchor=east}, at={(1.15,0.25)}}]
\addplot[mark=x,only marks] table {White.plot};
\addlegendentry{$\whitegame$ family}
\addplot[mark=o,only marks] table {Friedmann.plot};
\addlegendentry{$\mathcal{F}$ family}

\end{axis}
\end{tikzpicture}
\caption{Runtime of the optimised recursive algorithm (vertical log scale
axis) in seconds versus number of vertices of the games (horizontal axis).}
\label{fig:compared}
\end{figure}
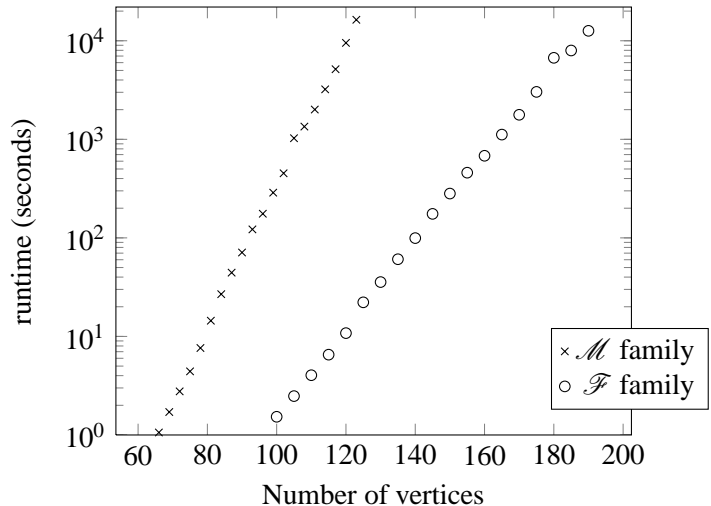

\section{Conclusions} \label{sec:conclusions}

We explored the complexity of solving special parity games using
Zielonka's recursive algorithm, proving that weak games are solved in
polynomial time and dull and nested solitaire games require exponential
time. The family of games $\solitaire$ we used to prove the exponential
lower bounds in addition tighten the lower bound to $\Omega(2^{|V|/3})$
\myomit{, where
$V$ is the set of vertices in the graph,} for the original algorithm
by Zielonka.

We show that a standard optimisation of the algorithm permits solving
all three classes of games in polynomial time. The technique used in
the optimisation (a tight integration of a strongly connected component
decomposition and Zielonka's algorithm) has been previously implemented
in~\cite{FL:09} and was observed to work well in practice. Our results
provide theoretical explanations for these observations.  

We furthermore studied the lower bounds of Zielonka's algorithm
\emph{with} optimisation. In the last section, we improve on Friedmann's
lower bound and arrive at a lower bound of $\Omega(2^{|V|/3})$ for the
optimised algorithm.  For this, we used a family of games $\whitegame$
for which we drew inspiration from the family $\solitaire$ and the games
defined in~\cite{Fri:11}. We believe that an additional advantage of
the families of games $\solitaire$ and $\whitegame$ we defined in this
paper over Friedmann's games lies in their (structural) simplicity.

Our complexity analysis for the special games offers additional insight
into the complexity of Zielonka's algorithm and its optimisation and
may inspire future optimisations of the algorithm. In a similar vein, the same type of analysis can be performed on other parity game solving algorithms from the literature, \eg strategy improvement algorithms.
\bibliographystyle{eptcs}
\bibliography{paper}

\end{document}